\newtheoremstyle{note}
{3pt}
{3pt}
{}
{}
{\itshape}
{:}
{.5em}
{}
\newtheorem{theorem}{Theorem}
\newtheorem{observation}[theorem]{Observation}
\DeclareMathOperator{\diag}{diag}
\newcommand{\me}{\mathrm{e}}
\newcommand{\mi}{\mathrm{i}}
\newcommand{\I}{\mathds{1}}
\renewcommand{\Bar}[1]{\overline{#1}}
\newcommand{\findover}[1][]{\underset{#1}{\mathrm{find}}}
\newcommand{\maxover}[1][]{\underset{#1}{\mathrm{max}}}
\newcommand{\minover}[1][]{\underset{#1}{\mathrm{min}}}
\newcommand{\subto}{\mathrm{~s.t.}}
\newcommand{\Sep}{\mathrm{SEP}}
\newcommand{\PPT}{\mathrm{PPT}}
\newcommand{\vl}{\bm{\lambda}}
\newcommand{\vp}{\bm{p}}
\newcommand{\vw}{\bm{w}}
\newcommand{\cH}{\mathcal{H}}
\newcommand{\cQ}{\mathcal{Q}}
\newcommand{\cR}{\mathcal{R}}
\newcommand{\cS}{\mathcal{S}}
\newcommand{\dC}{\mathds{C}}
\newcommand{\dR}{\mathds{R}}
\newcommand{\KK}{\mathrm{KK}}
\begin{document}

\title{Characterizing high-dimensional quantum contextuality}

\author{Xiao-Dong Yu}
\email{yuxiaodong@sdu.edu.cn}
\affiliation{Department of Physics, Shandong University, Jinan 250100, China}
\author{Isadora Veeren}
\email{veeren@cbpf.br}
\affiliation{Naturwissenschaftlich-Technische Fakult\"at, Universit\"at Siegen,
Walter-Flex-Str. 3, D-57068 Siegen, Germany}
\affiliation{Centro Brasileiro de Pesquisas F\'\i sicas (CBPF),
Rua Doutor Xavier Sigaud 150, 22290-180 Rio de Janeiro, Brazil}
\author{Otfried G\"uhne}
\email{otfried.guehne@uni-siegen.de}
\affiliation{Naturwissenschaftlich-Technische Fakult\"at, Universit\"at Siegen,
Walter-Flex-Str. 3, D-57068 Siegen, Germany}

\date{\today}

\begin{abstract}
  As a phenomenon encompassing measurement incompatibility and Bell 
  nonlocality, quantum contextuality is not only central to our understanding 
  of quantum mechanics, but also an essential resource in many quantum 
  information processing tasks. The dimension-dependent feature of quantum 
  contextuality is known ever since its discovery, but there is still a lack of 
  systematic methods for characterizing this fundamental feature. In this work, 
  we propose a systematic and reliable method for certifying the 
  high-dimensional advantages of quantum contextuality. In theory, our work 
  gives a complete characterization of the dimension-constrained quantum 
  contextual behavior, and particularly its nonconvex structure is revealed.  
  In application, our method can be used for dimensionality certification of 
  quantum information processing systems, and also for concentrating the 
  quantum contextual behavior into lower-dimensional systems.
\end{abstract}

\maketitle

\section{Introduction}
%
Qubits are the basic building blocks in many quantum information processing 
protocols. However, treating a real quantum system as a qubit is not only 
unnecessary, but also merely an approximation in practice. In recent years, 
experimental progress enabled the control of high-dimensional quantum systems 
and theoretical works demonstrated potential advantages of information 
processing in the high-dimensional case 
\cite{Erhard.etal2020,Cozzolino.etal2019}.  Consequently, many efficient 
methods have been developed to certify the high-dimensional advantages of 
various quantum resources, such as quantum entanglement 
\cite{Bavaresco.etal2018}, quantum coherence \cite{Ringbauer.etal2018}, and 
Bell nonlocality \cite{Navascues.Vertesi2015}.

As a phenomenon encompassing measurement incompatibility and Bell nonlocality, 
quantum contextuality is not only central to our understanding of quantum 
mechanics \cite{Kochen.Specker1967,Peres1993,Budroni.etal2022}, but also an 
essential resource in many quantum information processing tasks, such as in 
quantum computation \cite{Howard.etal2014,Bravyi.etal2018,Bravyi.etal2020}, in 
quantum cryptography \cite{BechmannPasquinucci.Peres2000,Svozil2009}, and in 
random number generation \cite{Abbott.etal2012,Kulikov.etal2017,Um.etal2013}.  
The study of quantum contextuality originates from the work of Kochen and 
Specker, which is now referred to as the Kochen-Specker theorem 
\cite{Kochen.Specker1967}.
The modern theory-independent framework for quantum contextuality was proposed 
by Klyachko \textit{et al.} \cite{Klyachko.etal2008} and further developed to 
the state-independent scenario by Cabello \textit{et al.} 
\cite{Cabello2008,Badziag.etal2009,Yu.Oh2012}. In these works, noncontextuality 
inequalities are discovered. These inequalities are obeyed by noncontextual 
hidden variable (NCHV) models, but can be violated by quantum mechanics.  This 
experimentally testable framework greatly promotes both the theoretical and 
experimental study of quantum contextuality.

Ever since the discovery of quantum contextuality, people have noticed 
its dimension-dependent feature. For example, both Kochen and Specker 
\cite{Kochen.Specker1967} and Bell \cite{Bell1966} proved that quantum 
contextuality does not exist in two-dimensional quantum systems. Also, 
every proof of the Kochen-Specker theorem, or more generally, every 
state-independent proof of quantum contextuality, is dimension dependent 
\cite{Ramanathan.Horodecki2014,Cabello.etal2015}. For some noncontextuality
inequalities, it was discussed in detail how the largest quantum violation
depends on the dimensionality of the quantum system \cite{Guehne.etal2014}.
Recently, Ray \textit{et al.} investigated the problem of calculating
finite-dimensional lower bounds of a family of noncontextuality 
inequalities \cite{Ray.etal2021}.

Despite all these efforts, there is still a lack of systematic methods for 
certifying the high-dimensional advantages of quantum contextuality. On one 
hand, systematic methods to calculate the $d$-dimensional violation of general 
noncontextuality inequalities are still missing. On the other hand, it is not 
known whether using linear inequalities gives a complete characterization of 
dimension-constrained quantum contextual behavior.

In this work, we solve both of these problems. We first propose an efficient 
and reliable method for certifying whether or not a quantum contextual behavior 
can result from a $d$-dimensional quantum system. This provides a complete 
characterization of dimension-constrained quantum contextual behaviors.  In  
particular, we prove that not all dimension-constrained quantum contextual 
behaviors can be characterized by the linear inequality method, which reveals 
a significant difference between quantum contextual behaviors with and without 
dimension constraints. Then, we show that the proposed method can also be 
adapted for calculating $d$-dimensional violation of general noncontextuality 
inequalities.  Finally, we discuss the implications of our results for 
certifying the dimensionality of quantum information processing systems and 
concentrating the quantum contextuality into lower-dimensional systems.

\section{Preliminaries}
%
In quantum contextuality theory, a measurement context 
$\{s_1,s_2,\dots,s_\alpha\}$ is a set of compatible measurements, which are 
jointly measurable.  An event $e=(o_1,o_2,\dots,o_\alpha\mid 
s_1,s_2,\dots,s_\alpha)$ means in a joint measurement 
$\{s_1,s_2,\dots,s_\alpha\}$ the outcome of $s_i$ is $o_i$ for 
$i=1,2,\dots,\alpha$. Two events $(o_1,o_2,\dots,o_\alpha\mid 
s_1,s_2,\dots,s_\alpha)$ and $(o'_1,o'_2,\dots,o'_\beta\mid 
s'_1,s'_2,\dots,s'_\beta)$ are called exclusive if there exist $a,b$ such that 
$s_a=s'_b$ but $o_a\ne o'_b$. The events and their exclusive relation can be 
depicted by a graph, which is called the exclusivity graph.  Mathematically, 
a graph $G$ is denoted by $(V,E)$, where $V$ is the set of vertices and $E$ is 
the set of edges, i.e., unordered pairs $\{i,j\}$ for some $i,j\in V$ and $i\ne 
j$.  In the exclusivity graph, the vertices represent the set of events 
$\{e_1,e_2,\dots,e_n\}$ and edges connect pairs of exclusive events; see 
Fig.~\ref{fig:KK} for an example graph $G_{\KK}$ and its connection to quantum 
contextuality. Notably, $G_{\KK}$ can be viewed as a variant of the ``bug'' 
graph from the original Kochen-Specker argument \cite{Kochen.Specker1967} and 
has been used for revealing quantum contextuality of almost all qutrit states 
\cite{Kurzynski.Kaszlikowski2012}.

In Ref.~\cite{Cabello.etal2014}, an important connection between the graph 
theory and quantum contextuality was discovered. Consider the sum of 
probabilities $\sum_{i=1}^np(e_i)$, where $p(e_i)$ are the probabilities of the 
corresponding events. In an NCHV model, determinism and exclusivity imply that 
this sum is upper bounded by
\begin{equation}
  \begin{aligned}
    \alpha(G)~:=~&\maxover[b_i]  && \sum_{i=1}^n b_i,\\
      &\subto && b_i=0 \text{~or~} 1,\\
      & && b_ib_j=0 \text{~~for~~} \{i,j\}\in E.
    \end{aligned}
    \label{eq:Independence}
\end{equation}
Here, $\alpha(G)$ is the so-called independence number of graph $G$, which 
corresponds to the maximum number of mutually unconnected vertices in $G$.
In quantum theory, the events $e_i$ are represented by projectors $P_i$ and two 
events are exclusive if they are orthogonal, i.e., $P_iP_j=0$. Without loss of 
generality, we can always assume that the state is pure and $P_i$ are rank one 
for studying noncontextuality inequalities \cite{Budroni.etal2022}. Thus, the 
quantum bound of $\sum_{i=1}^np(e_i)=\sum_{i=1}^n\Tr(\rho P_i)$ is given by
\begin{equation}
  \begin{aligned}
    \vartheta(G)~:=~&\maxover[\ket{\varphi},\ket{\psi_i}]
    && \sum_{i=1}^n \abs{\braket{\varphi}{\psi_i}}^2,\\
      &\subto && \braket{\psi_i}{\psi_j}=0 \text{~~for~~} \{i,j\}\in E.
    \end{aligned}
  \label{eq:Lovasz}
\end{equation}
$\vartheta(G)$ equals to the so-called Lov\'asz number of graph $G$ 
\cite{Lovasz1979}.  Here and in the following, we assume that $\ket{\varphi}$ 
and $\ket{\psi_i}$ are always normalized unless otherwise stated. We call the 
vectors $(\ket{\psi_1},\ket{\psi_2},\dots,\ket{\psi_n})$ satisfying 
the constraints in Eq.~\eqref{eq:Lovasz} a (rank-one) projective representation 
of $G$
\footnote{In graph theory, $(\ket{\psi_1},\ket{\psi_2},\dots,\ket{\psi_n})$ is 
called an orthonormal representation of the complement graph $\Bar{G}$.}.
One can easily see from the definition that $\alpha(G)\le\vartheta(G)$. If 
$\alpha(G)$ is strictly smaller than $\vartheta(G)$, it implies that there is 
a gap between the classical (NCHV) bound and the maximally achievable quantum 
value.  Consequently, noncontextuality inequalities can be constructed 
\cite{Yu.Tong2014,Cabello2016}. More generally, the set of all realizable 
probabilities in quantum theory
\begin{equation}
  \begin{aligned}
    \cQ(G)=\{(p_1,p_2,\dots,p_n)\mid p_i=\abs{\braket{\varphi}{\psi_i}}^2,\\
      \braket{\psi_i}{\psi_j}=0
    \text{~~for~~} \{i,j\}\in E\}
  \end{aligned}
  \label{eq:thetaBody}
\end{equation}
corresponds to the so-called theta body of graph $G$ 
\cite{Groetschel.etal1986}. If $(p_1,p_2,\dots,p_n)\in\cQ(G)$, it is then 
called a quantum contextual behavior or simply a quantum behavior.

\begin{figure}
  \centering
  \includegraphics[width=0.35\textwidth]{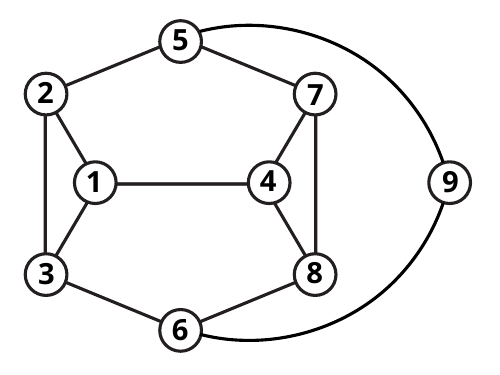}
  \caption{The nine-vertex graph $G_{\KK}$. In quantum theory, every vertex 
    represents a projector $P_i$, and two projectors $P_i$ and $P_j$ are 
    connected by an edge when $P_iP_j=0$, which means that they are exclusive 
    events.  The independence number of $G_{\KK}$ is  $\alpha(G_{\KK})=3$, 
    which can be achieved when $b_1,b_6,b_7$ in Eq.~\eqref{eq:Independence} 
    take the value one.  The Lov\'asz number of $G_{\KK}$ is 
    $\vartheta(G_{\KK})=4.4704$, which can be achieved in a $4$-dimensional 
  quantum system.}
  \label{fig:KK}
\end{figure}

The quantum bound $\vartheta(G)$ and the quantum behaviors $\cQ(G)$ reveal 
a characteristic feature of quantum mechanics, namely, quantum contextuality.  However, 
this feature depends on the dimension of the quantum system. For example, 
quantum contextuality does not exist in two-dimensional systems, and the 
maximization in Eq.~\eqref{eq:Lovasz} also depends on the dimension of the 
quantum system. In this work, we give a systematic study on the 
dimension-dependent nature of quantum contextuality, and complete methods for 
characterizing the set of $d$-dimensional quantum behaviors
\begin{equation}
  \begin{aligned}
    \cQ_d(G)=\{(p_1,p_2,\dots,p_n)\mid\ket{\varphi},\ket{\psi_i}\in\dC^d,\\
      p_i=\abs{\braket{\varphi}{\psi_i}}^2,
      ~\braket{\psi_i}{\psi_j}=0
    \text{~~for~~} \{i,j\}\in E\}
  \end{aligned}
  \label{eq:finiteThetaBody}
\end{equation}
are proposed.

\section{Finite-dimensional quantum contextuality}
%
We start with developing a method for verifying whether a behavior 
$\vp=(p_1,p_2,\dots,p_n)$ can result from a $d$-dimensional quantum system.  
Remarkably, there are two different kinds of verification. Verifying that 
$\vp\notin\cQ_d(G)$ ($\vp\in\cQ_d(G)$) is an outer (inner) approximation 
problem, which means an affirmative conclusion would imply that 
$\vp\notin\cQ_d(G)$ ($\vp\in\cQ_d(G)$), otherwise the verification is 
inconclusive. In this work we will consider both cases.  The starting point is 
the following observation.
\begin{observation}
  The probabilities $(p_1,p_2,\dots,p_n)$ are a $d$-dimensional quantum 
  behavior, i.e., $(p_1,p_2,\dots,p_n)\in\cQ_d(G)$, if and only if there exists 
  a Hermitian matrix $[X_{ij}]_{i,j=0}^{n}$ satisfying that
  \begin{subequations}
    \begin{align}
      \label{eq:verify1}
      & X_{0i}=X_{i0}=\sqrt{p_i}\quad\text{~for~} i=1,2,\dots,n,\\
      \label{eq:verify2}
      & X_{ii}=1\quad\text{~for~} i=0,1,2,\dots,n,\\
      \label{eq:verify3}
      & X_{ij}=0\quad\text{~for~}\{i,j\}\in E,\\
      \label{eq:verify4}
      & X\ge 0,~\rank(X)\le d.
    \end{align}
  \end{subequations}
  \label{thm:verify}
\end{observation}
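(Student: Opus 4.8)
The plan is to identify the Hermitian matrix $X$ of Eqs.~(\ref{eq:verify1})--(\ref{eq:verify4}) with the Gram matrix of the $n+1$ vectors $(\ket{\varphi},\ket{\psi_1},\dots,\ket{\psi_n})$ in $\dC^d$, using the elementary linear-algebra fact that an $(n+1)\times(n+1)$ Hermitian matrix $X$ satisfies $X\ge0$ and $\rank(X)\le d$ if and only if there exist $\ket{v_0},\ket{v_1},\dots,\ket{v_n}\in\dC^d$ with $X_{ij}=\braket{v_i}{v_j}$. One direction of this fact is the Cholesky-type factorization $X=V^\dagger V$ with $V$ a $d\times(n+1)$ matrix whose columns are the $\ket{v_i}$ (pad with zero rows if $\rank(X)<d$); the other is immediate, since any such Gram matrix is manifestly positive semidefinite with rank at most $d$.

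For the ``only if'' direction, assume $\vp\in\cQ_d(G)$, witnessed by normalized $\ket{\varphi},\ket{\psi_i}\in\dC^d$ with $p_i=\abs{\braket{\varphi}{\psi_i}}^2$ and $\braket{\psi_i}{\psi_j}=0$ for $\{i,j\}\in E$. The only discrepancy with Eq.~(\ref{eq:verify1}) is that $\braket{\varphi}{\psi_i}$ is a priori a complex number of modulus $\sqrt{p_i}$, not $\sqrt{p_i}$ itself. I would remove this by a phase redefinition $\ket{\psi_i}\mapsto\me^{\mi\theta_i}\ket{\psi_i}$ with $\theta_i=-\arg\braket{\varphi}{\psi_i}$, which leaves the $p_i$, the normalizations, and every orthogonality relation indexed by $E$ unchanged while making $\braket{\varphi}{\psi_i}=\sqrt{p_i}$. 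Taking $\ket{v_0}=\ket{\varphi}$, $\ket{v_i}=\me^{\mi\theta_i}\ket{\psi_i}$, and $X_{ij}=\braket{v_i}{v_j}$ then produces a Hermitian $X\ge0$ with $\rank(X)\le d$ for which Eqs.~(\ref{eq:verify1})--(\ref{eq:verify3}) hold by construction (the diagonal condition because the $\ket{v_i}$ are unit vectors).

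For the ``if'' direction, given such an $X$ I would factorize it as $X_{ij}=\braket{v_i}{v_j}$ with $\ket{v_i}\in\dC^d$ as above; then Eq.~(\ref{eq:verify2}) says each $\ket{v_i}$ is a unit vector, Eq.~(\ref{eq:verify3}) says $\braket{v_i}{v_j}=0$ whenever $\{i,j\}\in E$, and Eq.~(\ref{eq:verify1}) gives $\abs{\braket{v_0}{v_i}}^2=p_i$, so $\ket{\varphi}:=\ket{v_0}$ and $\ket{\psi_i}:=\ket{v_i}$ exhibit $\vp$ as an element of $\cQ_d(G)$. The only place that needs a moment's thought is precisely this phase bookkeeping --- recording $X_{0i}=\sqrt{p_i}$ rather than leaving it complex, so that $p_i$ is recovered as $\abs{X_{0i}}^2$ exactly as in the definition of $\cQ_d(G)$ that parallels Eq.~(\ref{eq:thetaBody}); beyond that the statement is just a dictionary between configurations of normalized vectors in $\dC^d$ and positive semidefinite matrices of rank $\le d$ with unit diagonal, and I do not expect any genuine obstacle.
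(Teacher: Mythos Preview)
Your proposal is correct and follows essentially the same approach as the paper: both directions rest on identifying $X$ with the Gram matrix of $(\ket{\varphi},\ket{\psi_1},\dots,\ket{\psi_n})$ after rephasing the $\ket{\psi_i}$ to make $\braket{\varphi}{\psi_i}=\sqrt{p_i}$, and the paper's explicit factorizations $X=L^\dagger L$ and $X=U\diag(\lambda_0,\dots,\lambda_{d-1},0,\dots,0)U^\dagger$ are just concrete realizations of the Gram-matrix/Cholesky fact you invoke.
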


The main idea for proving Observation~\ref{thm:verify} is that the dimension of 
the vector space generated by $\{\ket{v_i}\}_{i=0}^n$ equals to the rank of the 
corresponding Gram matrix $\left[\braket{v_i}{v_j}\right]_{i,j=0}^n$; for the 
detailed proof, please see Appendix~\ref{app:proof}.

With Observation~\ref{thm:verify}, we can construct a convex program which is 
necessary and sufficient for $(p_1,p_2,\dots,p_n)$ being a $d$-dimensional 
quantum behavior. Suppose that $X\in\dC^{(n+1)\times (n+1)}$ satisfies the 
constraints~(\ref{eq:verify1}--\ref{eq:verify4}). As $X\ge 0$ and $\rank(X)\le 
d$, one can construct a purification of $X$ with a $d$-dimensional auxiliary 
system, i.e., there exists an unnormalized pure state 
$\ket{\chi}\in\dC^{n+1}\otimes\dC^d$, such that
\begin{equation}
  \Tr_2(\ketbra{\chi})=X,
  \label{eq:purification}
\end{equation}
where $\Tr_2(\cdot)$ is the partial trace operation on the second subsystem 
$\dC^d$. Following the ideas in Refs.~\cite{Yu.etal2021,Yu.etal2022}, we 
consider the two-copy extension
\begin{equation}
  \Phi_{AB}=\ketbra{\chi}_A\otimes\ketbra{\chi}_B,
  \label{eq:twocopy}
\end{equation}
then $\Phi_{AB}$ is an unnormalized state in $\cH_A\otimes\cH_B$ with 
$\cH_A=\cH_B=\dC^{n+1}\otimes\dC^d$. Moreover, $\Phi_{AB}$ is in the symmetric 
subspace:
\begin{equation}
  V_{AB}\Phi_{AB}=\Phi_{AB},
\end{equation}
where the swap operator $V_{AB}$ is defined to satisfy that 
$V_{AB}\ket{\chi}_A\ket{\xi}_B=\ket{\xi}_A\ket{\chi}_B$ for any pair of states 
$\ket{\chi}$ and $\ket{\xi}$. By imposing the other constraints in 
Eqs.~(\ref{eq:verify1}--\ref{eq:verify3}), one can easily see that if 
$(p_1,p_2,\dots,p_n)\in\cQ_d(G)$, the following convex program is feasible:
\begin{equation}
  \begin{aligned}
    &\findover  &&\Phi_{AB}\in\Sep\\
    &\subto && V_{AB}\Phi_{AB}=\Phi_{AB},~\Tr(\Phi_{AB})=(n+1)^2,\\
    & &&
    \Tr_A\big[(\ketbra{i}{j}\otimes\I_d\otimes\I_B)\Phi_{AB}\big]
    =\frac{\mu_{ij}}{n+1}\Tr_A\big[\Phi_{AB}\big],
  \end{aligned}
  \label{eq:verifyConvex}
\end{equation}
where $\mu_{ij}$ denote all the known elements of $X_{ij}$, i.e.,
$\mu_{0i}=\mu_{i0}=\sqrt{p_i}$ for $i=1,2,\dots,n$, $\mu_{ii}=1$ for 
$i=0,1,2,\dots,n$, and $\mu_{ij}=0$ for $\{i,j\}\in E$, and $\Sep$ denotes the 
set of unnormalized separable states. Moreover, this convex program is also 
sufficient for $(p_1,p_2,\dots,p_n)\in\cQ_d(G)$ \cite{Yu.etal2022}. From 
Eq.~\eqref{eq:verifyConvex}, a complete hierarchy of semidefinite programs 
(SDPs) can be constructed for outer approximating $\cQ_d(G)$, and the lowest 
order is replacing $\Phi_{AB}\in\Sep$ with $\Phi_{AB}\in\PPT$. If any of these 
SDPs is infeasible, it will imply that $(p_1,p_2,\dots,p_n)\notin\cQ_d(G)$.

\begin{figure}
  \centering
  \includegraphics[width=.3\textwidth]{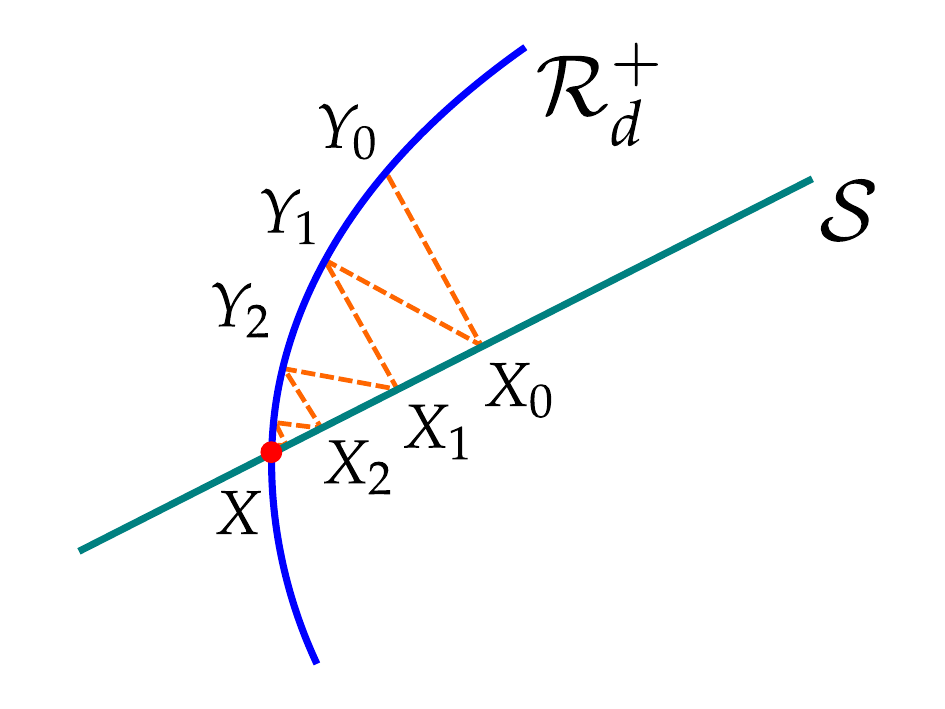}
  \caption{Illustration of the inner approximation method. The problem is 
    equivalent to finding $X\in\cS\cap\cR_d^+$, for which we minimize the 
    distance between points in $\cS$ and $\cR_d^+$. We first randomly choose 
    a point $Y_0\in\cR_d^+$ and find $X_0\in\cS$ that minimizes the distance 
    between $Y_0$ and $\cS$, i.e., $X_0=\arg\min_{X\in\cS}\norm{X-Y_0}_F$.  
    Similarly, with $X_0$ we can then find  
    $Y_1=\arg\min_{Y\in\cR_d^+}\norm{X_0-Y}_F$.  Repeating the above procedure, 
    i.e., $X_i=\arg\min_{X\in\cS}\norm{X-Y_i}_F$ and 
    $Y_{i+1}=\arg\min_{Y\in\cR_d^+}\norm{X_i-Y}_F$, we get a converging 
    sequence $\norm{X_i-Y_i}_F$.  If the limit is zero, we obtain the desired 
    $X\in\cS\cap\cR_d^+$ and Observation~\ref{thm:verify} implies that the 
    corresponding $(p_1,p_2,\dots,p_n)\in\cQ_d(G)$.
  }
  \label{fig:seesaw}
\end{figure}

For the inner approximation, i.e., verifying $(p_1,p_2,\dots,p_n)\in\cQ_d(G)$, 
we note that Observation~\ref{thm:verify} can be viewed as a semidefinite 
variant of the so-called low-rank matrix recovery, which is a rapidly developing 
field in computer science \cite{Davenport.Romberg2016}.  For our problem, the 
matrix size is relatively small and thus efficient methods can be constructed.  
The main idea of our method is illustrated in Fig.~\ref{fig:seesaw}. Let $\cS$ 
and $\cR_d^+$ denote the set of Hermitian matrices satisfying 
Eqs.~(\ref{eq:verify1}--\ref{eq:verify3}) and satisfying 
Eq.~\eqref{eq:verify4}, respectively,
then there exists $(p_1,p_2,\dots,p_n)\in\cQ_d(G)$ if and only if the solution 
of the following optimization problem is zero:
\begin{equation}
  \begin{aligned}
    &\minover[X,Y] \quad && \norm{X-Y}_F\\
    &\subto && X\in\cS,~Y\in\cR_d^+,
  \end{aligned}
  \label{eq:seesaw}
\end{equation}
where $\norm{\cdot}_F$ denotes the Frobenius norm. From Eq.~\eqref{eq:seesaw} 
an alternating optimization algorithm for verifying 
$(p_1,p_2,\dots,p_n)\in\cQ_d(G)$ can be constructed. More technical details of 
the inner and outer approximation algorithms can be found in 
Appendices~\ref{app:innerVerify} and \ref{app:outerVerify}.


To illustrate the power of our methods, we consider the nine-vertex graph 
$G_{\KK}$ in Fig.~\ref{fig:KK} and the behaviors
\begin{subequations}
  \begin{align}
    \vp_1=&\left(\frac{1}{3},\frac{1}{3},\frac{1}{3},
      \frac{1}{3},\frac{1}{3},\frac{1}{3},
    \frac{1}{3},\frac{1}{3},\frac{1}{3}\right),\\
    \vp_2=&\left(\frac{1}{2},\frac{1}{4},\frac{1}{4},
      \frac{1}{2},0,0,
    \frac{1}{4},\frac{1}{4},1\right),\\
    \vp_3=&\left(\frac{5}{12},\frac{7}{24},\frac{7}{24},
      \frac{5}{12},\frac{1}{6},\frac{1}{6},
    \frac{7}{24},\frac{7}{24},\frac{2}{3}\right).
    \label{eq:behaviors}
  \end{align}
\end{subequations}
One can prove that $\vp_1,\vp_2\in\cQ_d(G_{\KK})$ with the inner approximation 
methods, and $\vp_3\notin\cQ_d(G_{\KK})$ with the outer approximation method.
Note also that $\vp_3$ is a mixture of $\vp_1$ and $\vp_2$, i.e., 
$\vp_3=(\vp_1+\vp_2)/2$. This reveals a remarkable difference between the 
quantum behaviors with and without dimension constraints: the set of general 
quantum behaviors $\cQ(G)$ is convex but the $d$-dimensional counterpart 
$\cQ_d(G)$ may not be.
Similar nonconvexity results also exist for other quantum resources; see 
Refs.~\cite{Bowles.etal2014,Donohue.Wolfe2015,Sikora.etal2016,Mao.etal2022} for 
some examples and their applications. Another remarkable property is that 
although the quantum bound $\vartheta(G_\KK)$ can already be achieved when 
$d=4$, there exist quantum behaviors not in $\cQ_4(G_{\KK})$. One such example 
is
\begin{equation}
  \vp_4=\left(\frac{1}{3},\frac{1}{3},\frac{1}{3},
    0,\frac{2}{3},\frac{1}{3},
  0,0,\frac{1}{3}\right).
  \label{eq:5behavior}
\end{equation}

\section{Inequality method}
%
The standard method for characterizing quantum contextuality without dimension 
constraints relies on noncontextuality inequalities. These are linear 
inequalities closely related to the so-called weighted Lov\'asz number 
\cite{Cabello.etal2014}. Similarly, one can also characterize the 
$d$-dimensional quantum contextuality with the following quantity, which can be 
viewed as the $d$-dimensional weighted Lov\'asz number:
\begin{equation}
  \begin{aligned}
    \vartheta_d(G, \vw)~
    :=~&\maxover[\vp] && \vw\cdot\vp\\
    &\subto && \vp\in\cQ_d(G),
    \end{aligned}
  \label{eq:LovaszFinite}
\end{equation}
where the weights $\vw=(w_1,w_2,\dots,w_n)\in\dR^n$ and 
$\vw\cdot\vp=\sum_{i=1}^nw_ip_i$.  Unlike $\cQ(G)$, this inequality method is 
no longer sufficient for characterizing $\cQ_d(G)$ because of the nonconvexity 
property proved above.  Furthermore, contrary to $\cQ(G)$ 
\cite{Cabello.etal2015}, there are no longer reasons to assume that all $w_i$ 
are nonnegative for characterizing $\cQ_d(G)$.  What 
Eq.~\eqref{eq:LovaszFinite} characterizes is actually the convex hull of 
$\cQ_d(G)$ and thus the inequality method is less general than the direct 
method given above.  Nevertheless, an inequality is sometimes more suitable for 
experimental tests. In the following, we show that our method can also be 
adapted for calculating the bound in Eq.~\eqref{eq:LovaszFinite}.

By taking advantage of Observation~\ref{thm:verify}, we get the following 
equivalent form of $\vartheta_d(G,\vw)$:
\begin{equation}
  \begin{aligned}
    &\maxover[X] \qquad &&\sum_{i=1}^nw_i\abs{X_{0i}}^2\\
    &\subto && X_{ii}=1\quad\text{~for~} i=0,1,2,\dots,n,\\
    & && X_{ij}=0\quad\text{~for~}\{i,j\}\in E,\\
    & && X\ge 0,~\rank(X)\le d.
  \end{aligned}
  \label{eq:LovaszFiniteQuad}
\end{equation}
Similarly, this rank-constrained optimization can be transformed to the convex 
optimization \cite{Yu.etal2022}
\begin{equation}
  \begin{aligned}
    &\maxover[\Phi_{AB}] &&\Tr[W_{AB}\Phi_{AB}]\\
    &\subto && 
    \Phi_{AB}\in\Sep,\,V_{AB}\Phi_{AB}=\Phi_{AB},\,\Tr(\Phi_{AB})=(n+1)^2,\\
    & && \Tr_A\big[(\ketbra{i}{j}\otimes\I_d\otimes\I_B)\Phi_{AB}\big]=0
    \text{~~for~} \{i,j\}\in E,\\
    & && \Tr_A\big[(\ketbra{i}\otimes\I_d\otimes\I_B)\Phi_{AB}\big]
    =\frac{1}{n+1}\Tr_A\big[\Phi_{AB}\big]\\
    & && \quad\text{~for~} i=0,1,\dots,n,
  \end{aligned}
  \label{eq:LovaszFiniteConvex}
\end{equation}
where
$W_{AB} =\frac{1}{2}(\sum_{i=1}^nw_i\ketbra{0}{i} 
\otimes\I_d\otimes\ketbra{i}{0}\otimes\I_d+\text{H.c.})$, and $\text{H.c.}$ 
denotes the Hermitian conjugate of the previous term.

From Eq.~\eqref{eq:LovaszFiniteConvex} a complete SDP hierarchy can be 
constructed for upper bounding $\vartheta_d(G,\vw)$, but the low-order 
relaxations may not give good enough bounds. Thus, we provide another method, 
which is not always complete but may result in better bounds when low-order 
relaxations are considered. Consider the Gram matrix of 
$(\ket{\varphi},c_1\ket{\psi_1},c_2\ket{\psi_2},\dots,c_n\ket{\psi_n})$, where 
$c_i=\braket{\psi_i}{\varphi}$.  Similarly to Observation~\ref{thm:verify}, one 
can prove that $\vartheta_d(G,\vw)$ is upper bounded by the optimization
\begin{equation}
  \begin{aligned}
    &\maxover[X] \quad &&\sum_{i=1}^n w_iX_{ii}\\
    &\subto && X_{ii}=X_{0i}=X_{i0}\quad\text{~for~} i=1,2,\dots,n,\\
    & && X_{ij}=0\quad\text{~for~}\{i,j\}\in E,\\
    & && X_{00}=1,~X\ge 0,~\rank(X)\le d,
  \end{aligned}
  \label{eq:LovaszFiniteRayRelax}
\end{equation}
where the constraints $X_{ii}=X_{0i}=X_{i0}$ result from the conditions that 
$\bra{\varphi}c_i\ket{\psi_i}=\bra{\psi_i}c_i^*c_i\ket{\psi_i} 
=\abs{\braket{\psi_i}{\varphi}}^2$.  Let $\tilde\vartheta_d(G,\vw)$ denote the 
solution of Eq.~\eqref{eq:LovaszFiniteRayRelax}, then one can easily see that 
$\vartheta_d(G,\vw)\le\tilde\vartheta_d(G,\vw)$.  In Ref.~\cite{Ray.etal2021}, 
it was claimed that $\tilde\vartheta_d(G,\vw)=\vartheta_d(G,\vw)$ when 
$\cQ_d(G)$ is not empty. This is, however, not true. An explicit counterexample 
is shown below.

From Eqs.~(\ref{eq:LovaszFiniteQuad},\,\ref{eq:LovaszFiniteRayRelax}),
efficient inner and outer approximation methods can be similarly constructed
for lower and upper bounding $\vartheta_d(G,\vw)$ and $\tilde\vartheta_d(G,\vw)$.
As an example, we still consider graph $G_\KK$ in Fig.~\ref{fig:KK} and the 
case that $\vw=(1,1,\dots,1)$, for which $\vartheta_d(G,\vw)$ and 
$\tilde\vartheta_d(G,\vw)$ are denoted by $\vartheta_d(G)$ and 
$\tilde\vartheta_d(G)$, respectively. One can prove that 
$\vartheta_3(G_{\KK})=3.3333$ by showing that $3.3333$ is also both a lower 
bound and an upper bound (up to numerical precision).
In addition, graph $G_{\KK}$ also provides an explicit example of 
$\tilde\vartheta_d(G)\ne\vartheta_d(G)$.
This can be proved by constructing a matrix $X$ satisfying all the
constraints in Eq.~\eqref{eq:LovaszFiniteRayRelax} and
$\sum_{i=1}^nX_{ii}=3.3380$, which then implies that
$\tilde\vartheta_3(G_\KK)\ge 3.3380>\vartheta_3(G_\KK)=3.3333$.
See Appendices~\ref{app:lowerBound} and \ref{app:upperBound} for the technical 
details of the lower and upper bounding algorithms.

\section{Discussion and conclusion}
%
Given the extensive theoretical and experimental studies on the 
high-dimensional advantages of quantum resources in recent years 
\cite{Ali-Khan.etal2007,Neeley.etal2009,Lanyon.etal2009,Dada.etal2011,
Kues.etal2017,Wang.etal2018,Ecker.etal2019,Kong.etal2023}, our method can be 
used in various ways. First, our result provide a new approach for constructing 
so-called dimension witnesses
\cite{Brunner.etal2008,Gallego.etal2010,Ahrens.etal2012,Ahrens.etal2012,
Hendrych.etal2012,Brunner.etal2013,Bowles.etal2014,DAmbrosio.etal2014}. These 
are linear or nonlinear inequalities, which can be used to certify the 
experimenter's coherent control on a certain amount of 
levels in quantum information processing. 
With our outer approximation method, a violation of the 
inequality $\sum_iw_ip_i\ge\vartheta_d(G,\vw)$ can be rigorously proved, which  
would in turn certify that the amount of controllable levels is larger than $d$.  
Second, we consider the so-called contextuality contraction, which aims to 
achieve the same degree of contextuality with a lower-dimensional system and 
thus make the utilization of quantum contextuality more experimentally 
accessible \cite{Liu.etal2023}. One can easily see that our inner approximation 
method can be directly used for reducing the dimension and our outer 
approximation method can be used for calculating the limit of contextuality 
contraction. Finally, on a more abstract level, our results may elucidate the 
role of the quantum dimension in information processing. The original 
definition of the Lov\'asz number was motivated by notions of communication 
capacity and for contextuality  connections to communication tasks have been 
established \cite{Spekkens.etal2009,Cubitt.etal2010}.  Combining these concepts 
may lead to a deeper understanding  of high-dimensional quantum information 
processing, as well as novel applications.

In conclusion, we have provided powerful methods to characterize quantum 
contextual behavior under dimension constraints. Our method gives a complete 
characterization of dimension-constrained quantum contextual behaviors, and 
particularly we show that not all quantum contextual behaviors can be 
characterized by the linear inequality method.
As applications, our method can be used for dimensionality certification of 
quantum information processing systems, and also for concentrating the quantum 
contextuality behavior into lower-dimensional systems.

We thank Kishor Bharti, Ad\'an Cabello, and Zhen-Peng Xu for discussions.
This work has been supported by the Deutsche Forschungsgemeinschaft
(DFG, German Research Foundation, project numbers 447948357 
and 440958198), 
the Sino-German Center for Research Promotion (Project M-0294), 
the ERC (Consolidator Grant 683107/TempoQ),
and the German Ministry of Education and Research
(Project QuKuK, BMBF Grant No. 16KIS1618K).
X.D.Y. acknowledges support by the National Natural Science Foundation of China
(Grants No. 12205170 and No. 12174224)
and the Shandong Provincial Natural Science Foundation of China (Grant No. ZR2022QA084).

\onecolumngrid
\bigskip

\appendix

\newtheorem{manualobservationinner}{Observation}
\newenvironment{manualobservation}[1]{%
  \renewcommand\themanualobservationinner{#1}%
\manualobservationinner}{\endmanualobservationinner}

\section{Proof of Observation~\ref{thm:verify}}
\label{app:proof}
%
\begin{manualobservation}{\ref{thm:verify}}
  The probabilities $(p_1,p_2,\dots,p_n)$ are a $d$-dimensional quantum 
  behavior, i.e., $(p_1,p_2,\dots,p_n)\in\cQ_d(G)$, if and only if there exists 
  a Hermitian matrix $[X_{ij}]_{i,j=0}^{n}$ satisfying that
  \begin{subequations}
    \begin{align}
      \label{eq:verify1A}
      & X_{0i}=X_{i0}=\sqrt{p_i}\quad\text{~for~} i=1,2,\dots,n,\\
      \label{eq:verify2A}
      & X_{ii}=1\quad\text{~for~} i=0,1,2,\dots,n,\\
      \label{eq:verify3A}
      & X_{ij}=0\quad\text{~for~}\{i,j\}\in E,\\
      \label{eq:verify4A}
      & X\ge 0,~\rank(X)\le d.
    \end{align}
  \end{subequations}
  \label{thm:verifyA}
\end{manualobservation}

We first prove the ``only if'' part. Recall that the set of $d$-dimensional 
quantum behaviors $\cQ_d(G)$ is defined as
\begin{equation}
  \cQ_d(G)=\{(p_1,p_2,\dots,p_n)\mid\ket{\varphi},\ket{\psi_i}\in\dC^d,~
    p_i=\abs{\braket{\varphi}{\psi_i}}^2,
    ~\braket{\psi_i}{\psi_j}=0
  \text{~~for~~} \{i,j\}\in E\}.
  \label{eq:finiteThetaBodyA}
\end{equation}
Suppose that $(p_1,p_2,\dots,p_n)\in\cQ_d(G)$, then the definition in 
Eq.~\eqref{eq:finiteThetaBodyA} implies that there exist 
$\ket{\varphi},\ket{\psi_i}\in\dC^d$ such that $\braket{\psi_i}{\psi_j}=0$ for 
$\{i,j\}\in E$ and $p_i=\abs{\braket{\varphi}{\psi_i}}^2$. Let
\begin{equation}
  \ket{\tilde\psi_i}=\me^{-\mi\gamma_i}\ket{\psi_i},
\end{equation}
where $\gamma_i=\arg{\braket{\varphi}{\psi_i}}$, then we have that 
$\braket{\varphi}{\tilde\psi_i}=\sqrt{p_i}$. For simplicity, we let 
$\ket{\tilde\psi_0}=\ket{\varphi}$, and consider the Gram matrix of 
$(\ket{\tilde\psi_0},\ket{\tilde\psi_1},\ket{\tilde\psi_2},
\dots,\ket{\tilde\psi_n})$ defined by
\begin{equation}
  X_{ij}=\braket{\tilde\psi_i}{\tilde\psi_j}
  \quad\text{~for~}i,j=0,1,2,\dots,n.
  \label{eq:Gram}
\end{equation}
Then one can easily verify that $X$ satisfies 
conditions~(\ref{eq:verify1A}--\ref{eq:verify3A}). Furthermore, the matrix $X$ 
can also be written as
\begin{equation}
  X=\sum_{i,j=0}^nX_{ij}\ketbra{i}{j}
  =\sum_{i,j=0}^n\ket{i}\braket{\tilde\psi_i}{\tilde\psi_j}\bra{j}
  =L^\dagger L.
\end{equation}
where $L=\sum_{i=0}^n\ketbra{\tilde\psi_i}{i}$ is a $d\times (n+1)$ matrix, 
because $\ket{\tilde\psi_i}\in\dC^d$.  Thus, we have that $X\ge 0$ and 
$\rank(X)\le\rank(L)\le d$, corresponding to condition~\eqref{eq:verify4A}.

To prove the ``if'' part, suppose that $X$ satisfies all the 
conditions~(\ref{eq:verify1A}--\ref{eq:verify4A}), then it follows that $X$ 
admits a unitary decomposition
\begin{equation}
  X=U\diag(\lambda_0,\lambda_1,\dots,\lambda_{d-1},0,0,\dots,0)U^\dagger,
\end{equation}
where $\lambda_i\ge 0$. Let
\begin{equation}
  \ket{\psi_i}
  =\left(\sum_{k=0}^{d-1}\sqrt{\lambda_k}\ketbra{k}\right)U^\dagger\ket{i}
  =\sum_{k=0}^{d-1}\left(\sqrt{\lambda_k}\bra{k}U^\dagger\ket{i}\right)\ket{k}
\end{equation}
for $i=0,1,\dots,n$.  By letting $\ket{\varphi}=\ket{\psi_0}$, one can easily 
verify that $\ket{\varphi},\ket{\psi_i}$ satisfy the conditions in 
Eq.~\eqref{eq:finiteThetaBodyA}.

\section{Inner approximation for $d$-dimensional quantum contextuality}
\label{app:innerVerify}
%

For the inner approximation, according to Observation~\ref{thm:verify}, we only 
need to find a Hermitian matrix $[X_{ij}]_{i,j=0}^n$ such that
\begin{equation}
  X_{ij}=\mu_{ij},~X\ge 0,~\rank(X)\le d,
  \label{eq:verify}
\end{equation}
where $\mu_{ij}$ denote all the known elements of $X_{ij}$ in 
Eqs.~(\ref{eq:verify1}--\ref{eq:verify3}), i.e., $\mu_{0i}=\mu_{i0}=\sqrt{p_i}$ 
for $i=1,2,\dots,n$, $\mu_{ii}=1$ for $i=0,1,2,\dots,n$, and $\mu_{ij}=0$ for 
$\{i,j\}\in E$.

The first method is based on the fact that Eq.~\eqref{eq:verify} is feasible if 
and only if the solution of the following optimization problem is zero:
\begin{equation}
  \begin{aligned}
    &\minover[X,Y]  \quad && \norm{X-Y}_F\\
    &\subto && X=X^\dagger,~X_{ij}=\mu_{ij},\\
    &       && Y\ge 0,~\rank(Y)\le d,
  \end{aligned}
  \label{eq:innerN}
\end{equation}
where $\norm{\cdot}_F$ denotes the Frobenius norm (also known as 
Hilbert-Schmidt norm or Schatten-$2$ norm). Then, one can randomly choose 
a rank-$d$ semidefinite matrix $Y_\star$ as the initial value and alternatingly 
perform the following two optimization problems:
\begin{subequations}
  \begin{equation}
    \begin{aligned}
      &\minover[X]  \quad && \norm{X-Y_\star}_F\\
      &\subto && X=X^\dagger,~X_{ij}=\mu_{ij},
    \end{aligned}
    \label{eq:EckartYoungMirsky1}
  \end{equation}
  \begin{equation}
    \begin{aligned}
      &\minover[Y]  \quad && \norm{X_\star-Y}_F\\
      &\subto && Y\ge 0,~\rank(Y)\le d,
    \end{aligned}
    \label{eq:EckartYoungMirsky2}
  \end{equation}
\end{subequations}
where $X_\star$ in Eq.~\eqref{eq:EckartYoungMirsky2} is an optimal solution 
to Eq.~\eqref{eq:EckartYoungMirsky1} and $Y_\star$ in 
Eq.~\eqref{eq:EckartYoungMirsky1} is an optimal solution to 
Eq.~\eqref{eq:EckartYoungMirsky2}. One can easily see that the first 
minimization Eq.~\eqref{eq:EckartYoungMirsky1} is achieved when
\begin{equation}
  \begin{aligned}
      & X_{0i}=\sqrt{p_i}&&\text{~for~} i=1,2,\dots,n,\\
      & X_{ii}=1&&\text{~for~} i=0,1,2,\dots,n,\\
      & X_{ij}=0&&\text{~for~}\{i,j\}\in E,\\
      & X_{ij}=Y_{\star ij}&&\text{~for others}.
  \end{aligned}
\end{equation}
For the second minimization Eq.~\eqref{eq:EckartYoungMirsky2}, let 
$X_\star=\sum_{i=0}^n\lambda_i\ketbra{\phi_i}$ be the eigendecomposition of the 
Hermitian matrix $X_\star$ and the eigenvalues 
$(\lambda_0,\lambda_1,\dots,\lambda_n)$ are sorted in descending order, then 
the maximization is achieved when \cite[Appendix~F]{Xu.etal2021}
\begin{equation}
  Y=\sum_{i\in N_d} \lambda_i\ketbra{\phi_i},
\end{equation}
where $N_d=\{i \mid 0 \le i\le d-1,~\lambda_i\ge 0\}$, i.e., the summation is 
over the largest $d$ nonnegative eigenvalues, and if the number of nonnegative 
eigenvalues is smaller than $d$, the summation is over all the nonnegative 
eigenvalues. One can iteratively solve  
Eqs.~(\ref{eq:EckartYoungMirsky1},\,\ref{eq:EckartYoungMirsky2}) until 
convergence.  If the minimization convergences to zero, this means the final 
$X_\star$ satisfies all constraints in Eq.~\eqref{eq:verify}.

The second method is based on the fact that Eq.~\eqref{eq:verify} is feasible 
if and only if the solution of the following optimization problem is zero 
\cite[Sec~4.5]{Dattorro2005}:
\begin{equation}
  \begin{aligned}
    &\minover[X,P]  \quad && \Tr(XP)\\
    &\subto &&  X\ge 0,~X_{ij}=\mu_{ij},\\
    &       &&  P\ge 0,~P^2=P,~\rank(P)=n-d+1.
  \end{aligned}
  \label{eq:innerP}
\end{equation}
Then, one can randomly choose a rank-$(n-d+1)$ projector $P_\star$ as the 
initial value and alternatingly perform the following two optimization 
problems:
\begin{subequations}
  \begin{equation}
    \begin{aligned}
      &\minover[X]  \quad && \Tr(XP_\star)\\
      &\subto &&  X\ge 0,~X_{ij}=\mu_{ij},
    \end{aligned}
    \label{eq:Dattorro1}
  \end{equation}
  \begin{equation}
    \begin{aligned}
      &\minover[P]  \quad && \Tr(X_\star P)\\
      &\subto &&  P\ge 0,~P^2=P,~\rank(P)=n-d+1,
    \end{aligned}
    \label{eq:Dattorro2}
  \end{equation}
\end{subequations}
where $X_\star$ in Eq.~\eqref{eq:Dattorro2} is an optimal solution to 
Eq.~\eqref{eq:Dattorro1} and $P_\star$ in Eq.~\eqref{eq:Dattorro1} is an 
optimal solution to Eq.~\eqref{eq:Dattorro2}. One can easily see that the first 
optimization~\eqref{eq:Dattorro1} is an SDP, for which efficient algorithms 
exist \cite{Boyd.Vandenberghe2004}, and a solution to the second 
optimization~\eqref{eq:Dattorro2} is $P=\sum_{i=d}^n\ketbra{\phi_i}$, where 
$\ket{\phi_i}$ are from the eigendecomposition of 
$X_\star=\sum_{i=0}^n\lambda_i\ketbra{\phi_i}$ 
($\lambda_0\ge\lambda_1\ge\lambda_2\ge\dots\ge\lambda_n$). One can iteratively 
solve  Eqs.~(\ref{eq:Dattorro1},\,\ref{eq:Dattorro2}) until convergence.  If 
the minimization convergences to zero, this means the final $P_\star$ and 
$X_\star$ satisfy the constraints in Eq.~\eqref{eq:innerP} and $\Tr(X_\star 
P_\star)=0$ and thus $X_\star$ satisfies all constraints in 
Eq.~\eqref{eq:verify}.

\section{Outer approximation for $d$-dimensional quantum contextuality}
\label{app:outerVerify}
%

For the outer approximation, we prove in the main text that 
$(p_1,p_2,\dots,p_n)\in\cQ_d(G)$ if and only if there exists $\Phi_{AB}$ such 
that the following optimization problem is feasible:
\begin{equation}
  \begin{aligned}
    &\findover \quad &&\Phi_{AB}\in\Sep\\
    &\subto && V_{AB}\Phi_{AB}=\Phi_{AB},~\Tr(\Phi_{AB})=(n+1)^2,\\
    & && \Tr_A\big[(\ketbra{i}{j}\otimes\I_d\otimes\I_B)\Phi_{AB}\big]
    =\frac{\mu_{ij}}{n+1}\Tr_A\big[\Phi_{AB}\big],
  \end{aligned}
  \label{eq:verifyConvexApp}
\end{equation}
where $\mu_{ij}$ denote all the known elements of $X_{ij}$ in 
Eqs.~(\ref{eq:verify1}--\ref{eq:verify3}), i.e., $\mu_{0i}=\mu_{i0}=\sqrt{p_i}$ 
for $i=1,2,\dots,n$, $\mu_{ii}=1$ for $i=0,1,2,\dots,n$, and $\mu_{ij}=0$ for 
$\{i,j\}\in E$, and $\Sep$ denotes the set of all unnormalized separable 
states. From Eq.~\eqref{eq:verifyConvexApp}, a complete SDP hierarchy can be 
constructed for outer approximating $\cQ_d(G)$, and the lowest order is 
replacing $\Phi_{AB}\in\Sep$ with $\Phi_{AB}\in\PPT$ \cite{Yu.etal2022}. After 
applying the inherent symmetry in the problem, the PPT relaxation of the 
problem can be written as the following feasibility problem \cite{Yu.etal2022}:
\begin{equation}
  \begin{aligned}
    &\findover\quad  && \Phi_I,\Phi_V\\
    &\subto && \Phi_I+\Phi_V\ge 0,~\Phi_I-\Phi_V\ge 0,\\
    &       && \Phi_I^{T_{A_1}}\ge 0,
	       ~\Phi_I^{T_{A_1}}+d\Phi_V^{T_{A_1}}\ge 0,\\
    &       && \Phi_V=V_{A_1B_1}\Phi_I,~d^2\Tr(\Phi_I)+d\Tr(\Phi_V)=(n+1)^2,\\
    &       && \Tr_{A_1}\big[(\ketbra{i}{j}\otimes\I_{n+1})
	       (d\Phi_I+\Phi_V)\big]
	       =\frac{\mu_{ij}}{n+1}\Tr_{A_1}\big[d\Phi_I+\Phi_V\big],
  \end{aligned}
\end{equation}
where $\Phi_I$ and $\Phi_V$ are Hermitian operators on 
$\cH_{A_1}\otimes\cH_{B_1}=\dC^{n+1}\otimes\dC^{n+1}$, i.e., they are 
$(n+1)^2\times(n+1)^2$ Hermitian matrices. Actually, as all parameters in the 
SDP are real, we can take $\Phi_I$ and $\Phi_V$ as symmetric real matrices. To 
quantify the feasibility of the above problem, one can also rewrite it as the 
following maximization problem:
\begin{equation}
  \begin{aligned}
    &\maxover[\Phi_I,\Phi_V, \eta]\quad  && \eta\\
    &\subto && \Phi_I+\Phi_V\ge\eta\I,~\Phi_I-\Phi_V\ge\eta\I,\\
    &       && \Phi_I^{T_{A_1}}\ge\eta\I,
	       ~\Phi_I^{T_{A_1}}+d\Phi_V^{T_{A_1}}\ge\eta\I,\\
    &       && \Phi_V=V_{A_1B_1}\Phi_I,~d^2\Tr(\Phi_I)+d\Tr(\Phi_V)=(n+1)^2,\\
    &       && \Tr_{A_1}\big[(\ketbra{i}{j}\otimes\I_{n+1})
	       (d\Phi_I+\Phi_V)\big]
	       =\frac{\mu_{ij}}{n+1}\Tr_{A_1}\big[d\Phi_I+\Phi_V\big].
  \end{aligned}
\end{equation}
When the solution $\eta_{\max}<0$, we can assert that  
$(p_1,p_2,\dots,p_n)\notin\cQ_d(G)$.

\section{Lower bounds for $\vartheta_d(G,\vw)$ and $\tilde\vartheta_d(G,\vw)$}
\label{app:lowerBound}
%

To derive a lower bound for $\vartheta_d(G,\vw)$, we take advantage of 
Eq.~\eqref{eq:LovaszFiniteQuad} in the main text, which we rewrite below
\begin{equation}
  \begin{aligned}
    &\maxover[X] \qquad &&\sum_{i=1}^nw_i\abs{X_{0i}}^2\\
    &\subto && X_{ii}=1\quad\text{~for~} i=0,1,2,\dots,n,\\
    & && X_{ij}=0\quad\text{~for~}\{i,j\}\in E,\\
    & && X\ge 0,~\rank(X)\le d,
  \end{aligned}
  \label{eq:LovaszFiniteQuadApp}
\end{equation}
where $[X_{ij}]_{i,j=0}^n$ is an $(n+1)\times (n+1)$ Hermitian matrix.
One can easily see that if $\vartheta_d(G,\vw)$ exists, i.e., if $G$ admits 
a $d$-dimensional projective representation, the optimization on $X$ in 
Eq.~\eqref{eq:LovaszFiniteQuadApp} is equivalent to the following optimization 
on $X,Y$, when $\eta\to+\infty$:
\begin{equation}
  \begin{aligned}
    &\maxover[X,Y] \qquad &&\sum_{i=1}^nw_i\abs{X_{0i}}^2 -\eta\norm{X-Y}_F^2\\
    &\subto && X_{ii}=1\quad\text{~for~} i=0,1,2,\dots,n,\\
    & && X_{ij}=0\quad\text{~for~}\{i,j\}\in E,\\
    & && X=X^\dagger,\\
    & && Y\ge 0,~\rank(Y)\le d,
  \end{aligned}
  \label{eq:LovaszFiniteSeesaw}
\end{equation}
where $X$ and $Y$ are $(n+1)\times (n+1)$ Hermitian matrices and 
$\norm{\cdot}_F$ is the Frobenius norm. This is because that whenever $X\ne Y$, 
the negative penalty term $-\eta\norm{X-Y}_F^2$ in the objective function will 
dominate for large enough $\eta$. When $X=Y$, Eq.~\eqref{eq:LovaszFiniteSeesaw} 
reduces to Eq.~\eqref{eq:LovaszFiniteQuadApp}.

To solve Eq.~\eqref{eq:LovaszFiniteSeesaw}, we perform an alternating 
maximization over $X$ and $Y$. For fixed $Y=Y_\star$, we find an $X_\star$, 
which achieves the optimization
\begin{equation}
  \begin{aligned}
    &\maxover[X] \qquad &&\sum_{i=1}^nw_i\abs{X_{0i}}^2 
    -\eta\norm{X-Y_\star}_F^2\\
    &\subto && X_{ii}=1\quad\text{~for~} i=0,1,2,\dots,n,\\
    & && X_{ij}=0\quad\text{~for~}\{i,j\}\in E,\\
    & && X=X^\dagger.
  \end{aligned}
  \label{eq:QuadAlt1}
\end{equation}
This can be analytically solved because
\begin{equation}
  \sum_{i=1}^nw_i\abs{X_{0i}}^2-\eta\norm{X-Y_\star}_F^2
  =\sum_{i=1}^n\left(\frac{w_i}{2}\abs{X_{0i}}^2+\frac{w_i}{2}\abs{X_{i0}}^2\right)
  -\eta\sum_{i,j}\abs{X_{ij}-Y_{\star ij}}^2.
\end{equation}
When $\eta>\frac{w_i}{2}$ for all $i$, the maximization is achieved when
\begin{equation}
  \begin{aligned}
    & X_{ii}=1\quad\text{~for~} i=0,1,2,\dots,n,\\
    & X_{ij}=0\quad\text{~for~}\{i,j\}\in E,\\
    & X_{0i}=\frac{2\eta Y_{\star 0i}}{2\eta-w_i},~
    X_{i0}=\frac{2\eta Y_{\star i0}}{2\eta-w_i}\quad\text{~for~} 
    i=1,2,\dots,n,\\
    & X_{ij}=Y_{\star ij}\quad\text{~for~others}.
  \end{aligned}
\end{equation}
Similarly, for fixed $X_\star$, we find an $Y_\star$, which achieves the 
optimization
\begin{equation}
  \begin{aligned}
    &\maxover[Y] \qquad &&\sum_{i=1}^nw_i\abs{X_{\star 0i}}^2 
    -\eta\norm{X_\star -Y}_F^2\\
    &\subto && Y\ge 0,~\rank(Y)\le d.
  \end{aligned}
  \label{eq:QuadAlt2}
\end{equation}
Let $X_\star=\sum_{i=0}^n\lambda_i\ketbra{\phi_i}$ be the eigendecomposition of 
the Hermitian matrix $X_\star$ and the eigenvalues 
$(\lambda_0,\lambda_1,\dots,\lambda_n)$ are sorted in descending order, then 
the maximization is achieved when \cite[Appendix~F]{Xu.etal2021}
\begin{equation}
  Y=\sum_{i\in N_d} \lambda_i\ketbra{\phi_i},
\end{equation}
where $N_d=\{i \mid 0 \le i\le d-1,~\lambda_i\ge 0\}$, i.e., the summation is 
over the largest $d$ nonnegative eigenvalues, and if the number of nonnegative 
eigenvalues is smaller than $d$, the summation is over all the nonnegative 
eigenvalues.

In practice, it is usually good enough to choose some fixed $\eta$ that is 
neither too large nor too small. If $\eta$ is too small, the maximum term may 
not close to zero when the maximum is obtained. If $\eta$ is too large, the 
weight of $\sum_{i=1}^nw_i\abs{X_{\star 0i}}^2$ will be small and sensitive to 
numerical error. For example, one can choose $\eta=\max_i\abs{w_i}$ or 
$\eta=10\max_i\abs{w_i}$. Initially, one randomly chooses a rank-$d$ 
semidefinite matrix $Y_\star$.  Then, one can perform the alternating 
maximization over $M$ and $X$ in Eqs.~(\ref{eq:QuadAlt1},\,\ref{eq:QuadAlt2}) 
until convergence. At last, it is usually necessary to take $\eta\to+\infty$, 
i.e., run the minimization
\begin{equation}
  \begin{aligned}
    &\minover[X,Y] \qquad &&\norm{X-Y}_F^2\\
    &\subto && X_{ii}=1\quad\text{~for~} i=0,1,2,\dots,n,\\
    & && X_{ij}=0\quad\text{~for~}\{i,j\}\in E,\\
    & && X=X^\dagger,\\
    & && Y\ge 0,~\rank(Y)\le d,
  \end{aligned}
\end{equation}
further to find a better low-rank approximation $X$ (when the minimization is 
zero).  In our example, by taking $\eta=10$, we can easily obtain the lower 
bound $\vartheta_3(G_{\KK})\ge 3.3333$. In general, when $\eta$ is smaller the 
alternating algorithm convergences faster, and when $\eta$ is larger a higher 
precision lower bound is obtained.

For the most widely used case that all $w_i>0$, we provide an alternative 
method for calculating lower bounds of $\vartheta_d(G,\vw)$. Let $M$ be the 
Gram matrix of 
$(\sqrt{w_1}\ket{\psi_1},\sqrt{w_2}\ket{\psi_2},\dots,\sqrt{w_n}\ket{\psi_n})$, 
i.e.,
\begin{equation}
  M_{ij}=\sqrt{w_iw_j}\braket{\psi_i}{\psi_j}\quad\text{ for } i,j=1,2,\dots,n.
\end{equation}
From the definition of $\ket{\psi_i}$, we get that $M\ge 0$, $M_{ii}=w_i$ for 
all $i$ and $M_{ij}=0$ for all $\{i,j\}\in E$.  Further, the constraints 
$\ket{\psi_i}\in\dC^d$ imply that $\rank(M)\le d$.  At last, we have the 
condition
\begin{equation}
  \max_{\ket{\varphi}}\sum_{i=1}^nw_i\abs{\braket{\varphi}{\psi_i}}^2=
  \lambda_{\max}\left(\sum_{i=1}^nw_i\ketbra{\psi_i}\right)
  =\lambda_{\max}(M),
\end{equation}
where the last equality results from the relation that 
$\lambda_{\max}(LL^\dagger)=\lambda_{\max}(L^\dagger L)$ for 
$L=\sum_{i=1}^n\sqrt{w_i}\ketbra{\psi_i}{i}$ 
\cite[Theorem~1.3.22]{Horn.Johnson2012}.  Thus, we prove that 
$\vartheta_d(G,\vw)$ is upper bounded by the optimization
\begin{equation}
  \begin{aligned}
    &\maxover[M] \qquad &&\lambda_{\max}(M)\\
    &\subto && M_{ii}=w_i\quad\text{~for~} i=1,2,\dots,n,\\
    & && M_{ij}=0\quad\text{~for~}\{i,j\}\in E,\\
    & && M\ge 0,~\rank(M)\le d,
  \end{aligned}
  \label{eq:LovaszFiniteLambda}
\end{equation}
where $[M_{ij}]_{i,j=1}^n$ is an $n\times n$ Hermitian matrix and 
$\lambda_{\max}(M)$ denotes the largest eigenvalue of $M$. Actually, one can 
easily show that Eq.~\eqref{eq:LovaszFiniteLambda} gives the exact value of 
$\vartheta_d(G,\vw)$ with a similar argument as in 
Observation~\ref{thm:verify}.

To get a lower bound of $\vartheta_d(G,\vw)$ from 
Eq.~\eqref{eq:LovaszFiniteLambda}, one needs to accomplish two tasks: finding 
a low-rank matrix satisfying some SDP constraints and making the largest 
eigenvalue of such matrices as large as possible. To this end, we show that if 
$\vartheta_d(G,\vw)$ exists, i.e., if $G$ admits a $d$-dimensional projective 
representation, Eq.~\eqref{eq:LovaszFiniteLambda} is equivalent to the 
following maximization in the limit of $\varepsilon\to 0^+$:
\begin{equation}
  \begin{aligned}
    &\maxover[M,X] \qquad
    &&\varepsilon^{-1}[\Tr(MX)-\sum_{i=1}^nw_i]\\
    &\subto && M_{ii}=w_i\quad\text{for}~i=1,2,\dots,n,\\
    & && M_{ij}=0\quad\text{for}~\{i,j\}\in E,\\
    & && M\ge 0,~\vl(X)=\vp_{\varepsilon,d},
  \end{aligned}
  \label{eq:LovaszFiniteSeesawLambda}
\end{equation}
where $\vl(\cdot)$ denotes the spectrum and
\begin{equation}
  \vp_{\varepsilon,d}=(1+\varepsilon,
    \underbrace{1,\dots,1}_{d-1~\text{times}},
  \underbrace{0,\dots,0}_{n-d~\text{times}}).
  \label{eq:vp}
\end{equation}
To see the reason, let $M$ and $X$ be the variables that achieve the 
maximization.  Suppose that 
$\vl(M)=\vl_M=(\lambda_1,\lambda_2,\dots,\lambda_n)$ and 
$\lambda_1\ge\lambda_2\ge\dots\ge\lambda_n\ge 0$. Then, we have 
$\sum_{i=1}^n\lambda_i=\Tr(M)=\sum_{i=1}^nw_i$, and the maximum equals to 
$\varepsilon^{-1}(\vl_M\cdot\vl_{\varepsilon,d}-\sum_{i=1}^nw_i)
=\lambda_1+\varepsilon^{-1}(\sum_{i=1}^d\lambda_i-\sum_{i=1}^nw_i)
=\lambda_{\max}(M)-\varepsilon^{-1}(\sum_{i=d+1}^n\lambda_i)$. Thus, in the 
limit of $\varepsilon\to 0^+$, the maximization is achieved when $\lambda_i=0$ 
for $i=d+1,d+2,\dots,n$, i.e., $\rank(M)\le d$, and further the maximum is 
$\lambda_{\max}(M)$.

To solve Eq.~\eqref{eq:LovaszFiniteSeesawLambda}, we perform an alternating 
maximization over $M$ and $X$. For fixed $X=X_\star$, we find an $M_\star$, 
which achieves the optimization
\begin{equation}
  \begin{aligned}
    &\maxover[M] \qquad
    &&\varepsilon^{-1}[\Tr(MX_\star)-\sum_{i=1}^nw_i]\\
    &\subto && M_{ii}=w_i\quad\text{for}~i=1,2,\dots,n,\\
    & && M_{ij}=0\quad\text{for}~\{i,j\}\in E,\\
    & && M\ge 0.
  \end{aligned}
  \label{eq:LambdaAlt1}
\end{equation}
Similarly, for fixed $M_\star$, we find an $X_\star$, which achieves the 
optimization
\begin{equation}
  \begin{aligned}
    &\maxover[X] \qquad
    &&\varepsilon^{-1}[\Tr(M_\star X)-\sum_{i=1}^nw_i]\\
    &\subto && \vl(X)=\vp_{\varepsilon,d}.
  \end{aligned}
  \label{eq:LambdaAlt2}
\end{equation}
The optimization in Eq.~\eqref{eq:LambdaAlt1} is an SDP, for which efficient 
algorithms exist \cite{Boyd.Vandenberghe2004}. The optimization in 
Eq.~\eqref{eq:LambdaAlt2} can be solved explicitly with 
$X_\star=U_\star\diag(\vp_{\varepsilon,d})U_\star^\dagger$, where the unitary 
matrix $U_\star$ results from the eigendecomposition 
$M_\star=U_\star\diag(\vl_\star)U_\star^\dagger$ and the eigenvalues 
$\vl_\star$ are sorted in descending order.

In practice, it is usually good enough to choose some fixed $\varepsilon$ that 
is neither too large nor too small, for example, $\varepsilon=\min_i\abs{w_i}$ 
or $\varepsilon=\min_i\abs{w_i}/10$.  Initially, one randomly chooses a unitary 
matrix $U_\star$ and let 
$X_\star=U_\star\diag(\vp_{\varepsilon,d})U_\star^\dagger$.  Then, one can 
perform the alternating maximization over $M$ and $X$ in 
Eqs.~(\ref{eq:LambdaAlt1},\,\ref{eq:LambdaAlt2}) until convergence. At last, 
one needs to verify whether the obtained $M_\star$ is (approximately) of rank 
no larger than $d$. If so, we obtained a lower bound of $\vartheta_d(G,\vw)$, 
otherwise we repeat the whole procedure with a different initial unitary matrix 
$U_\star$.

Similarly, we can also calculate lower bounds of $\tilde\vartheta_d(G,\vw)$, 
which is defined as
\begin{equation}
  \begin{aligned}
    &\maxover[X] \qquad &&\sum_{i=1}^n w_iX_{ii}\\
    &\subto && X_{ii}=X_{0i}\quad\text{~for~} i=1,2,\dots,n,\\
    & && X_{ij}=0\quad\text{~for~}\{i,j\}\in E,\\
    & && X_{00}=1,~X\ge 0,~\rank(X)\le d.
  \end{aligned}
  \label{eq:LovaszFiniteRayRelaxApp}
\end{equation}
Similar to Eq.~\eqref{eq:LovaszFiniteSeesaw}, one can construct an alternating 
optimization based on (theoretically in the limit $\eta\to+\infty$)
\begin{equation}
  \begin{aligned}
    &\maxover[X,Y] \qquad &&\sum_{i=1}^nw_iX_{ii} -\eta\norm{X-Y}_F^2\\
    &\subto && X_{ii}=X_{0i}\quad\text{~for~} i=1,2,\dots,n,\\
    & && X_{ij}=0\quad\text{~for~}\{i,j\}\in E,\\
    & && X_{00}=1,~X=X^\dagger,\\
    & && Y\ge 0,~\rank(Y)\le d.
  \end{aligned}
  \label{eq:LovaszFiniteSeesawTilde}
\end{equation}
Alternatively, one can construct an alternating maximization based on 
(theoretically in the limit $\eta\to+\infty$)
\begin{equation}
  \begin{aligned}
    &\maxover[X,P] \qquad &&\sum_{i=1}^n w_iX_{ii}-\eta\Tr(PX)\\
    &\subto && X_{ii}=X_{0i}\quad\text{~for~} i=1,2,\dots,n,\\
    & && X_{ij}=0\quad\text{~for~}\{i,j\}\in E,\\
    & && X_{00}=1,~X\ge 0,\\
    & && P\ge 0,~P^2=P,~\rank(P)=n-d+1.
  \end{aligned}
  \label{eq:LovaszFiniteSeesawTildeAlt}
\end{equation}
Again, it is usually sufficient to take, for example, $\eta=\max_i\abs{w_i}$ or 
$\eta=10\max_i\abs{w_i}$ in practice. After performing the alternating 
optimization, we can confirm that maximization gives a lower bound for 
$\tilde\vartheta_d(G,\vw)$ if $\norm{X-Y}_F=0$ in 
Eq.~\eqref{eq:LovaszFiniteSeesawTilde} or $\Tr(PX)=0$ in 
Eq.~\eqref{eq:LovaszFiniteSeesawTildeAlt}.

As an example, with either Eq.~\eqref{eq:LovaszFiniteSeesawTilde} or 
Eq.~\eqref{eq:LovaszFiniteSeesawTildeAlt}, we show that 
\begin{equation}
  \tilde\vartheta_3(G_\KK)\ge3.3380.
  \label{eq:lowerGKKTilde}
\end{equation}
This can also be verified analytically by 
taking
\begin{equation}
  \begin{aligned}
    \ket{\psi_1}&=(1,0,0)\\
    \ket{\psi_2}&=(0,1,0)\\
    \ket{\psi_3}&=(0,0,1)\\
    \ket{\psi_4}&=(0,a,\sqrt{1-a^2})\\
    \ket{\psi_5}&=(\frac{1}{\sqrt{2}},0,\frac{1}{\sqrt{2}})\\
    \ket{\psi_6}&=(\sqrt{1-a^2},a,0)\\
    \ket{\psi_7}&=(\frac{a}{\sqrt{1+a^2}},\frac{\sqrt{1-a^2}}{\sqrt{1+a^2}},
		  -\frac{a}{\sqrt{1+a^2}})\\
    \ket{\psi_8}&=(0,0,0)\\
    \ket{\psi_9}&=(\frac{a}{\sqrt{1+a^2}},-\frac{\sqrt{1-a^2}}{\sqrt{1+a^2}},
		  \frac{a}{\sqrt{1+a^2}})
  \end{aligned}
\end{equation}
and $\ket{\varphi}$ is taken as the eigenvector corresponding to the largest 
eigenvalue of $\sum_{i=1}^9\ketbra{\psi_i}$. Note that we have taken 
$\ket{\psi_8}$ as a zero vector. Taking $X$ as the Gram matrix of 
$(\ket{\varphi},\braket{\psi_1}{\varphi}\ket{\psi_1},
\braket{\psi_2}{\varphi}\ket{\psi_2},
\dots,\braket{\psi_9}{\varphi}\ket{\psi_9})$, 
one can easily verify that $X$ satisfies all the constraints in
Eq.~\eqref{eq:LovaszFiniteRayRelaxApp} and
\begin{equation}
  \sum_{i=1}^9X_{ii}=\sum_{i=1}^9\abs{\braket{\varphi}{\psi_i}}^2=3.3380,
\end{equation}
when $a=0.5121$. Thus, we prove that $\tilde\vartheta_3(G_{\KK})\ge 3.3380$.

\section{Upper bounds for $\vartheta_d(G,\vw)$ and $\tilde\vartheta_d(G,\vw)$}
\label{app:upperBound}
%

In the main text, we have shown that $\vartheta_d(G,\vw)$ can be written as
\begin{equation}
  \begin{aligned}
    &\maxover[X] \qquad &&\sum_{i=1}^nw_i\abs{X_{0i}}^2\\
    &\subto && X_{ii}=1\quad\text{~for~} i=0,1,2,\dots,n,\\
    & && X_{ij}=0\quad\text{~for~}\{i,j\}\in E,\\
    & && X\ge 0,~\rank(X)\le d,
  \end{aligned}
  \label{eq:LovaszFiniteQuadApp2}
\end{equation}
where $[X_{ij}]_{i,j=0}^n$ is an $(n+1)\times (n+1)$ Hermitian matrix. By 
taking advantage of the result in Ref.~\cite{Yu.etal2022}, one can directly get 
that Eq.~\eqref{eq:LovaszFiniteQuadApp2} is equivalent to the convex program
\begin{equation}
  \begin{aligned}
    &\maxover[\Phi_{AB}]  &&
    \Tr\big[W_{AB}\Phi_{AB}\big]\\
    &\subto && 
    \Phi_{AB}\in\Sep,~V_{AB}\Phi_{AB}=\Phi_{AB},~\Tr(\Phi_{AB})=(n+1)^2,\\
    & &&
    \Tr_A[(\ketbra{i}{j}\otimes\I_d\otimes\I_B)\Phi_{AB}]=0
    \quad\text{~for~}\{i,j\}\in E,\\
    & &&
    \Tr_A[(\ketbra{i}\otimes\I_d\otimes\I_B)\Phi_{AB}]
    =\frac{1}{n+1}\Tr_A\big[\Phi_{AB}\big]
    \quad\text{~for~}i=0,1,\dots,n,
  \end{aligned}
  \label{eq:LovaszFiniteConvexApp}
\end{equation}
where $\Phi_{AB}$ is an unnormalized state in $\cH_A\otimes\cH_B$ with 
$\cH_A=\cH_B=\dC^{n+1}\otimes\dC^d$ and
\begin{equation}
  W_{AB}=\frac{1}{2}\left(\sum_{i=1}^nw_i\ketbra{0}{i}
    \otimes\I_d\otimes\ketbra{i}{0}\otimes\I_d
    + \sum_{i=1}^nw_i\ketbra{i}{0}\otimes\I_d
  \otimes\ketbra{0}{i}\otimes\I_d\right).
\end{equation}
Here the average is taken for making sure that $W_{AB}$ is Hermitian and thus 
the objective function is always real.

Theoretically, one can also construct a complete hierarchy for calculating 
$\vartheta_d(G,\vw)$ from Eq.~\eqref{eq:LovaszFiniteConvexApp}, whose lowest 
order (i.e., the PPT relaxation) can be written as (after applying the inherent 
symmetry) \cite{Yu.etal2022}
\begin{equation}
  \begin{aligned}
    &\maxover[\Phi_I,\Phi_V]  &&
    \Tr\left[
      \frac{1}{2}\sum_{i=1}^nw_i\big(\ketbra{0}{i}\otimes\ketbra{i}{0}
      +\ketbra{i}{0}\otimes\ketbra{0}{i}\big)\big(d^2\Phi_I+d\Phi_V\big)
    \right]\\
    &\subto && \Phi_I+\Phi_V\ge 0,~\Phi_I-\Phi_V\ge 0,~\Phi_I^{T_{A_1}}\ge 0,
	       ~\Phi_I^{T_{A_1}}+d\Phi_V^{T_{A_1}}\ge 0,\\
    &       && \Phi_V=V_{A_1B_1}\Phi_I,~d^2\Tr(\Phi_I)+d\Tr(\Phi_V)=(n+1)^2,\\
    &       && \Tr_{A_1}\big[(\ketbra{i}{j}\otimes\I_{n+1})
	       (d^2\Phi_I+d\Phi_V)\big]=0\quad\text{~for~}\{i,j\}\in E,\\
    &       && \Tr_{A_1}\big[(\ketbra{i}\otimes\I_{n+1})
	       (d^2\Phi_I+d\Phi_V)\big]=\frac{1}{n+1}\Tr_{A_1}
	     \big[d^2\Phi_I+d\Phi_V\big]\quad\text{~for~}i=0,1,\dots,n,
  \end{aligned}
  \label{eq:FiniteLovaszQuadRelax}
\end{equation}
where $\Phi_I$ and $\Phi_V$ are Hermitian operators on 
$\cH_{A_1}\otimes\cH_{B_1}=\dC^{n+1}\otimes\dC^{n+1}$, i.e., they are 
$(n+1)^2\times(n+1)^2$ Hermitian matrices. Actually, as all parameters in the 
SDP are real, we can take $\Phi_I$ and $\Phi_V$ as symmetric real matrices.
In practice, the upper bound from Eq.~\eqref{eq:FiniteLovaszQuadRelax} can be 
trivial. For example, the upper bound from Eq.~\eqref{eq:FiniteLovaszQuadRelax} 
for $\vartheta_3(G_\KK)$ is even larger than $\vartheta(G_\KK)=3.4704$.
In the following, we provide several alternative methods for upper bounding 
$\vartheta^d(G,\vw)$ and $\tilde\vartheta^d(G,\vw)$ based on 
Eq.~\eqref{eq:LovaszFiniteRayRelax}, which we rewrite below
\begin{equation}
  \begin{aligned}
    &\maxover[X] \qquad &&\sum_{i=1}^n w_iX_{ii}\\
    &\subto && X_{ii}=X_{0i}\quad\text{~for~} i=1,2,\dots,n,\\
    & && X_{ij}=0\quad\text{~for~}\{i,j\}\in E,\\
    & && X_{00}=1,~X\ge 0,~\rank(X)\le d,
  \end{aligned}
  \label{eq:LovaszFiniteRayRelaxApp2}
\end{equation}
where $[X_{ij}]_{i,j=0}^n$ is an $(n+1)\times (n+1)$ Hermitian matrix. Note 
that $X$ can be interpreted as the Gram matrix of 
$(\ket{\varphi},\braket{\psi_1}{\varphi}\ket{\psi_1},
\braket{\psi_2}{\varphi}\ket{\psi_2},\dots,
\braket{\psi_n}{\varphi}\ket{\psi_n})$ and thus 
$X_{ii}=X_{0i}=\abs{\braket{\psi_i}{\varphi}}^2$ for $i=1,2,\dots,n$.

First, one can derive an upper bound for $\vartheta_d(G,\vw)$ from the 
so-called basis identity,
\begin{equation}
  X_{0i_1}+X_{0i_2}+\dots+X_{0i_d}=1
  \quad\Leftrightarrow\quad
  \abs{\braket{\psi_{i_1}}{\varphi}}^2
  +\abs{\braket{\psi_{i_2}}{\varphi}}^2
  +\dots+\abs{\braket{\psi_{i_d}}{\varphi}}^2
  =1,
  \label{eq:basisIdentity}
\end{equation}
when $\{\ket{\psi_{i_1},\ket{\psi_{i_2}},\dots,\ket{\psi_{i_d}}}\}$ forms 
a basis, i.e., the set of nodes $\{i_1,i_2,\dots,i_d\}$ forms a $d$-clique in 
the language of graph theory. Taking graph $G_\KK$ and $d=3$ as an example, we 
can add the following additional constraints to 
Eq.~\eqref{eq:LovaszFiniteRayRelaxApp2}:
\begin{equation}
  X_{01}+X_{02}+X_{03}=1,~
  X_{04}+X_{07}+X_{08}=1,
\end{equation}
and obtain a nontrivial upper bound
\begin{equation}
  \vartheta_3(G_\KK)\le 3.4153.
\end{equation}

Second, we have explained in the main text that 
$\vartheta_d(G)\le\tilde\vartheta_d(G)$. Thus, an upper bound of 
$\tilde\vartheta_d(G)$ also provides an upper of $\vartheta_d(G)$.  For  
Eq.~\eqref{eq:LovaszFiniteRayRelaxApp2}, we can also construct an equivalent 
convex optimization,
\begin{equation}
  \begin{aligned}
    &\maxover[\Phi_{AB}]  &&
    \Tr\big[(\ketbra{0}\otimes\I_d\otimes
      \sum_{i=1}^nw_i\ketbra{i}\otimes\I_d)\Phi_{AB}\big]\\
    &\subto && \Phi_{AB}\in\Sep,~V_{AB}\Phi_{AB}=\Phi_{AB},\\
    & &&
    \Tr[(\ketbra{0}\otimes\I_d\otimes
    \ketbra{0}\otimes\I_d)\Phi_{AB}]=1,\\
    & &&
    \Tr_A[(\ketbra{i}{j}\otimes\I_d\otimes\I_B)\Phi_{AB}]=0
    \quad\text{~for~}\{i,j\}\in E,\\
    & &&
    \Tr_A[(\ketbra{0}{i}\otimes\I_d\otimes\I_B)\Phi_{AB}]
    =\Tr_A[(\ketbra{i}\otimes\I_d\otimes\I_B)\Phi_{AB}]
    \quad\text{~for~}i=1,2,\dots,n,
  \end{aligned}
  \label{eq:LovaszFiniteRayRelaxConvex}
\end{equation}
where $\Phi_{AB}$ is an unnormalized state in $\cH_A\otimes\cH_B$. The proof of 
the equivalence does not follow directly from Ref.~\cite{Yu.etal2022}, but 
a similar argument can be formulated. The difference is that now $\Phi_{AB}$ 
admits the form
\begin{equation}
  \Phi_{AB}=\sum_\ell p_\ell\ketbra{\chi_\ell}_A\otimes\ketbra{\chi_\ell}_B
  +\sum_s\ketbra{\xi_s}_A\otimes\ketbra{\xi_s}_B,
\end{equation}
where $p_\ell$ form a probability distribution, and
\begin{subequations}
  \begin{align}
    &\Tr[(\ketbra{0}\otimes\I_d)\ketbra{\chi_\ell}]=1,\\
    &\Tr[(\ketbra{i}{j}\otimes\I_d)\ketbra{\chi_\ell}]=0
    \quad\text{~for~}\{i,j\}\in E,\\
    &\Tr[(\ketbra{0}{i}\otimes\I_d)\ketbra{\chi_\ell}]=
    \Tr[(\ketbra{i}\otimes\I_d)\ketbra{\chi_\ell}]
    \quad\text{~for~}i=1,2,\dots,n,
  \end{align}
\end{subequations}
\begin{subequations}
  \begin{align}
    &\Tr[(\ketbra{0}\otimes\I_d)\ketbra{\xi_s}]=0,\\
    &\Tr[(\ketbra{i}{j}\otimes\I_d)\ketbra{\xi_s}]=0
    \quad\text{~for~}\{i,j\}\in E\\
    &\Tr[(\ketbra{0}{i}\otimes\I_d)\ketbra{\xi_s}]=
    \Tr[(\ketbra{i}\otimes\I_d)\ketbra{\xi_s}]
    \quad\text{~for~}i=1,2,\dots,n,
  \end{align}
\end{subequations}
from which one can easily verify that $\ketbra{\xi_s}=0$ and thus $\Phi_{AB}$ 
is of the form
\begin{equation}
  \Phi_{AB}=\sum_\ell p_\ell\ketbra{\chi_\ell}_A\otimes\ketbra{\chi_\ell}_B,
\end{equation}
where $\Tr_2(\ketbra{\chi_\ell})$ are in the feasible region of 
Eq.~\eqref{eq:LovaszFiniteRayRelaxApp2}.

Again, one can construct a complete hierarchy for calculating 
$\tilde\vartheta_d(G,\vw)$ from Eq.~\eqref{eq:LovaszFiniteRayRelaxConvex}, 
whose lowest order (i.e., the PPT relaxation) can be written as (after applying 
the inherent symmetry)
\begin{equation}
  \begin{aligned}
    &\maxover[\Phi_I,\Phi_V]  &&
    \Tr\Big[\big(\ketbra{0}\otimes\sum_{i=1}^nw_i\ketbra{i}\big)
	       \big(d^2\Phi_I+d\Phi_V\big)\Big]\\
    &\subto && \Phi_I+\Phi_V\ge 0,~\Phi_I-\Phi_V\ge 0,~\Phi_I^{T_{A_1}}\ge 0,
	       ~\Phi_I^{T_{A_1}}+d\Phi_V^{T_{A_1}}\ge 0,\\
    &       && \Phi_V=V_{A_1B_1}\Phi_I,
	       ~\Tr\big[(\ketbra{0}\otimes\ketbra{0})
	       (d^2\Phi_I+d\Phi_V)\big]=1,\\
    &       && \Tr_{A_1}\big[(\ketbra{i}{j}\otimes\I_{n+1})
	       (d^2\Phi_I+d\Phi_V)\big]=0\quad\text{~for~}\{i,j\}\in E,\\
    &       && \Tr_{A_1}\big[(\ketbra{0}{i}\otimes\I_{n+1})
	       (d^2\Phi_I+d\Phi_V)\big]=
	       \Tr_{A_1}\big[(\ketbra{i}\otimes\I_{n+1})
	       (d^2\Phi_I+d\Phi_V)\big]
	       \quad\text{~for~}i=0,1,\dots,n,
  \end{aligned}
  \label{eq:FiniteLovaszTildeRelax}
\end{equation}
where $\Phi_I$ and $\Phi_V$ are Hermitian operators on 
$\cH_{A_1}\otimes\cH_{B_1}=\dC^{n+1}\otimes\dC^{n+1}$, i.e., they are 
$(n+1)^2\times(n+1)^2$ Hermitian matrices. Actually, as all parameters in the 
SDP are real, we can take $\Phi_I$ and $\Phi_V$ as symmetric real matrices.  
With Eq.~\eqref{eq:FiniteLovaszTildeRelax}, we get an upper bound for 
$\vartheta_3(G_\KK)$ and $\tilde\vartheta_3(G_\KK)$
\begin{equation}
  \vartheta_3(G_\KK)\le\tilde\vartheta_3(G_\KK)\le3.3637,
\end{equation}
which is a nontrivial bound, because 
$\vartheta(G_\KK)=\vartheta_4(G_\KK)=3.4704$.

The upper bound for $\vartheta_d(G)$ can be further improved when considering 
the basis identity in Eq.~\eqref{eq:basisIdentity}. The basis identity implies 
the following extra constraints for Eq.~\eqref{eq:FiniteLovaszTildeRelax}:
\begin{equation}
  \sum_{\ell=1}^d\Tr_{A_1}\big[(\ketbra{i_\ell}\otimes\I_{n+1})
  (d^2\Phi_I+d\Phi_V)\big]=
  \Tr_{A_1}\big[(\ketbra{0}\otimes\I_{n+1})(d^2\Phi_I+d\Phi_V)\big],
  \label{eq:extraConsClique}
\end{equation}
when $\{i_1,i_2,\dots,i_d\}$ is a $d$-clique. By taking advantage the two 
$3$-cliques in $G_\KK$, one can obtain an improved upper bound
\begin{equation}
  \vartheta_3(G_\KK)\le3.3405,
\end{equation}
which is still larger than than the obtained lower bound 
$\vartheta_3(G_\KK)\ge3.3333$. Interestingly, this gap can be closed by taking 
advantage the obtained lower bound. More precisely, if one knows a lower bound 
$\vartheta_d(G,\vw)\ge\theta$, the following inequality can be added to 
Eq.~\eqref{eq:LovaszFiniteRayRelaxApp2}:
\begin{equation}
  \sum_{i=1}^nw_iX_{ii}\ge\theta.
\end{equation}
Note that rounding down is necessary for choosing $\theta$. This inequality 
implies the following extra constraint for 
Eq.~\eqref{eq:FiniteLovaszTildeRelax}:
\begin{equation}
  \sum_{i=1}^n\Tr_{A_1}\big[(w_i\ketbra{i}\otimes\I_{n+1})
  (d^2\Phi_I+d\Phi_V)\big]\ge\theta
  \Tr_{A_1}\big[(\ketbra{0}\otimes\I_{n+1})(d^2\Phi_I+d\Phi_V)\big],
  \label{eq:extraConsLower}
\end{equation}
Taking $\theta=3.33$ for $G_\KK$, one can already obtain the upper bound
\begin{equation}
  \vartheta_3(G_\KK)\le3.3333,
  \label{eq:upperGKK}
\end{equation}
and thus $\vartheta_3(G_\KK)=3.3333$ up to a numerical precision.

Comparing Eq.~\eqref{eq:upperGKK} with Eq.~\eqref{eq:lowerGKKTilde},
we also get that $\tilde\vartheta_3(G_\KK)>\vartheta_3(G_\KK)$, which
provides an explicit example of $\tilde\vartheta_d(G)\ne\vartheta_d(G)$.
The reason why $\tilde\vartheta_d(G,\vw)$ may be strictly larger than 
$\vartheta_d(G,\vw)$ is that the constraints in 
Eq.~\eqref{eq:LovaszFiniteRayRelaxApp2} can also be satisfied when some 
$\ket{\psi_i}=0$. This is also the reason why the proof in
Ref.~\cite{Ray.etal2021} fails; the proof there needs the obtained vectors
to be normalizable, which is only applicable to nonzero vectors.
Thus, we also need to consider all induced subgraphs 
$\{G[S]\mid S\subseteq V\}$. More precisely,
\begin{equation}
  \tilde\vartheta_d(G,\vw)=\max_{S\subseteq V}\vartheta_d(G[S],\vw[S]),
  \label{eq:LovaszFiniteTilde}
\end{equation}
where the induced subgraph $G[S]$ is defined by removing the vertices in  
$V\setminus S$ and the edges connecting to them from $G$, and $\vw[S]$ are the 
weights on the remaining vertices. Equation~\eqref{eq:LovaszFiniteTilde} also 
implies that $\tilde\vartheta_d(G,\vw)$ exists for any $d$, because there 
always exists an induced subgraph of $G$ for which a $d$-dimensional projective 
representation exists.

The upper bound in Eq.~\eqref{eq:upperGKK} is obtained by considering both 
Eq.~\eqref{eq:extraConsClique} and Eq.~\eqref{eq:extraConsLower}. When 
considering only Eq.~\eqref{eq:extraConsLower} with $\theta=3.3380$, one can 
also get an upper bound for $\tilde\vartheta_3(G_\KK)$
\begin{equation}
  \tilde\vartheta_3(G_\KK)\le 3.3535
\end{equation}
However, this upper bound is still slightly larger than the best lower bound 
$\tilde\vartheta_3(G_\KK)\ge 3.3380$ that we obtained. Although we believe that 
$\tilde\vartheta_3(G_\KK)$ equals to $3.3380$ up to a numerical precision, this 
cannot be proved with the present method. We leave this open question for 
future research.

\twocolumngrid

\bibliography{QuantumInf}

\begin{thebibliography}{60}%
\makeatletter
\providecommand \@ifxundefined [1]{%
 \@ifx{#1\undefined}
}%
\providecommand \@ifnum [1]{%
 \ifnum #1\expandafter \@firstoftwo
 \else \expandafter \@secondoftwo
 \fi
}%
\providecommand \@ifx [1]{%
 \ifx #1\expandafter \@firstoftwo
 \else \expandafter \@secondoftwo
 \fi
}%
\providecommand \natexlab [1]{#1}%
\providecommand \enquote  [1]{``#1''}%
\providecommand \bibnamefont  [1]{#1}%
\providecommand \bibfnamefont [1]{#1}%
\providecommand \citenamefont [1]{#1}%
\providecommand \href@noop [0]{\@secondoftwo}%
\providecommand \href [0]{\begingroup \@sanitize@url \@href}%
\providecommand \@href[1]{\@@startlink{#1}\@@href}%
\providecommand \@@href[1]{\endgroup#1\@@endlink}%
\providecommand \@sanitize@url [0]{\catcode `\\12\catcode `\$12\catcode
  `\&12\catcode `\#12\catcode `\^12\catcode `\_12\catcode `\%12\relax}%
\providecommand \@@startlink[1]{}%
\providecommand \@@endlink[0]{}%
\providecommand \url  [0]{\begingroup\@sanitize@url \@url }%
\providecommand \@url [1]{\endgroup\@href {#1}{\urlprefix }}%
\providecommand \urlprefix  [0]{URL }%
\providecommand \Eprint [0]{\href }%
\providecommand \doibase [0]{https://doi.org/}%
\providecommand \selectlanguage [0]{\@gobble}%
\providecommand \bibinfo  [0]{\@secondoftwo}%
\providecommand \bibfield  [0]{\@secondoftwo}%
\providecommand \translation [1]{[#1]}%
\providecommand \BibitemOpen [0]{}%
\providecommand \bibitemStop [0]{}%
\providecommand \bibitemNoStop [0]{.\EOS\space}%
\providecommand \EOS [0]{\spacefactor3000\relax}%
\providecommand \BibitemShut  [1]{\csname bibitem#1\endcsname}%
\let\auto@bib@innerbib\@empty
\bibitem [{\citenamefont {Erhard}\ \emph {et~al.}(2020)\citenamefont {Erhard},
  \citenamefont {Krenn},\ and\ \citenamefont {Zeilinger}}]{Erhard.etal2020}%
  \BibitemOpen
  \bibfield  {author} {\bibinfo {author} {\bibfnamefont {M.}~\bibnamefont
  {Erhard}}, \bibinfo {author} {\bibfnamefont {M.}~\bibnamefont {Krenn}},\ and\
  \bibinfo {author} {\bibfnamefont {A.}~\bibnamefont {Zeilinger}},\ }\bibfield
  {title} {\bibinfo {title} {Advances in high-dimensional quantum
  entanglement},\ }\href {https://doi.org/10.1038/s42254-020-0220-6} {\bibfield
   {journal} {\bibinfo  {journal} {Nat. Rev. Phys.}\ }\textbf {\bibinfo
  {volume} {2}},\ \bibinfo {pages} {365} (\bibinfo {year} {2020})}\BibitemShut
  {NoStop}%
\bibitem [{\citenamefont {Cozzolino}\ \emph {et~al.}(2019)\citenamefont
  {Cozzolino}, \citenamefont {Da~Lio}, \citenamefont {Bacco},\ and\
  \citenamefont {Oxenl\o~we}}]{Cozzolino.etal2019}%
  \BibitemOpen
  \bibfield  {author} {\bibinfo {author} {\bibfnamefont {D.}~\bibnamefont
  {Cozzolino}}, \bibinfo {author} {\bibfnamefont {B.}~\bibnamefont {Da~Lio}},
  \bibinfo {author} {\bibfnamefont {D.}~\bibnamefont {Bacco}},\ and\ \bibinfo
  {author} {\bibfnamefont {L.~K.}\ \bibnamefont {Oxenl\o~we}},\ }\bibfield
  {title} {\bibinfo {title} {High-dimensional quantum communication: Benefits,
  progress, and future challenges},\ }\href
  {https://doi.org/10.1002/qute.201900038} {\bibfield  {journal} {\bibinfo
  {journal} {Advanced Quantum Technologies}\ }\textbf {\bibinfo {volume} {2}},\
  \bibinfo {pages} {1900038} (\bibinfo {year} {2019})}\BibitemShut {NoStop}%
\bibitem [{\citenamefont {{Bavaresco}}\ \emph {et~al.}(2018)\citenamefont
  {{Bavaresco}}, \citenamefont {{Herrera Valencia}}, \citenamefont
  {{Kl{\"o}ckl}}, \citenamefont {{Pivoluska}}, \citenamefont {{Erker}},
  \citenamefont {{Friis}}, \citenamefont {{Malik}},\ and\ \citenamefont
  {{Huber}}}]{Bavaresco.etal2018}%
  \BibitemOpen
  \bibfield  {author} {\bibinfo {author} {\bibfnamefont {J.}~\bibnamefont
  {{Bavaresco}}}, \bibinfo {author} {\bibfnamefont {N.}~\bibnamefont {{Herrera
  Valencia}}}, \bibinfo {author} {\bibfnamefont {C.}~\bibnamefont
  {{Kl{\"o}ckl}}}, \bibinfo {author} {\bibfnamefont {M.}~\bibnamefont
  {{Pivoluska}}}, \bibinfo {author} {\bibfnamefont {P.}~\bibnamefont
  {{Erker}}}, \bibinfo {author} {\bibfnamefont {N.}~\bibnamefont {{Friis}}},
  \bibinfo {author} {\bibfnamefont {M.}~\bibnamefont {{Malik}}},\ and\ \bibinfo
  {author} {\bibfnamefont {M.}~\bibnamefont {{Huber}}},\ }\bibfield  {title}
  {\bibinfo {title} {Measurements in two bases are sufficient for certifying
  high-dimensional entanglement},\ }\href
  {https://doi.org/10.1038/s41567-018-0203-z} {\bibfield  {journal} {\bibinfo
  {journal} {Nat. Phys.}\ }\textbf {\bibinfo {volume} {14}},\ \bibinfo {pages}
  {1032} (\bibinfo {year} {2018})}\BibitemShut {NoStop}%
\bibitem [{\citenamefont {Ringbauer}\ \emph {et~al.}(2018)\citenamefont
  {Ringbauer}, \citenamefont {Bromley}, \citenamefont {Cianciaruso},
  \citenamefont {Lami}, \citenamefont {Lau}, \citenamefont {Adesso},
  \citenamefont {White}, \citenamefont {Fedrizzi},\ and\ \citenamefont
  {Piani}}]{Ringbauer.etal2018}%
  \BibitemOpen
  \bibfield  {author} {\bibinfo {author} {\bibfnamefont {M.}~\bibnamefont
  {Ringbauer}}, \bibinfo {author} {\bibfnamefont {T.~R.}\ \bibnamefont
  {Bromley}}, \bibinfo {author} {\bibfnamefont {M.}~\bibnamefont
  {Cianciaruso}}, \bibinfo {author} {\bibfnamefont {L.}~\bibnamefont {Lami}},
  \bibinfo {author} {\bibfnamefont {W.~Y.~S.}\ \bibnamefont {Lau}}, \bibinfo
  {author} {\bibfnamefont {G.}~\bibnamefont {Adesso}}, \bibinfo {author}
  {\bibfnamefont {A.~G.}\ \bibnamefont {White}}, \bibinfo {author}
  {\bibfnamefont {A.}~\bibnamefont {Fedrizzi}},\ and\ \bibinfo {author}
  {\bibfnamefont {M.}~\bibnamefont {Piani}},\ }\bibfield  {title} {\bibinfo
  {title} {Certification and quantification of multilevel quantum coherence},\
  }\href {https://doi.org/10.1103/PhysRevX.8.041007} {\bibfield  {journal}
  {\bibinfo  {journal} {Phys. Rev. X}\ }\textbf {\bibinfo {volume} {8}},\
  \bibinfo {pages} {041007} (\bibinfo {year} {2018})}\BibitemShut {NoStop}%
\bibitem [{\citenamefont {Navascu\'es}\ and\ \citenamefont
  {V\'ertesi}(2015)}]{Navascues.Vertesi2015}%
  \BibitemOpen
  \bibfield  {author} {\bibinfo {author} {\bibfnamefont {M.}~\bibnamefont
  {Navascu\'es}}\ and\ \bibinfo {author} {\bibfnamefont {T.}~\bibnamefont
  {V\'ertesi}},\ }\bibfield  {title} {\bibinfo {title} {Bounding the set of
  finite dimensional quantum correlations},\ }\href
  {https://doi.org/10.1103/PhysRevLett.115.020501} {\bibfield  {journal}
  {\bibinfo  {journal} {Phys. Rev. Lett.}\ }\textbf {\bibinfo {volume} {115}},\
  \bibinfo {pages} {020501} (\bibinfo {year} {2015})}\BibitemShut {NoStop}%
\bibitem [{\citenamefont {Kochen}\ and\ \citenamefont
  {Specker}(1967)}]{Kochen.Specker1967}%
  \BibitemOpen
  \bibfield  {author} {\bibinfo {author} {\bibfnamefont {S.}~\bibnamefont
  {Kochen}}\ and\ \bibinfo {author} {\bibfnamefont {E.}~\bibnamefont
  {Specker}},\ }\bibfield  {title} {\bibinfo {title} {The problem of hidden
  variables in quantum mechanics},\ }\href@noop {} {\bibfield  {journal}
  {\bibinfo  {journal} {J. Math. Mech.}\ }\textbf {\bibinfo {volume} {17}},\
  \bibinfo {pages} {59} (\bibinfo {year} {1967})}\BibitemShut {NoStop}%
\bibitem [{\citenamefont {Peres}(1993)}]{Peres1993}%
  \BibitemOpen
  \bibfield  {author} {\bibinfo {author} {\bibfnamefont {A.}~\bibnamefont
  {Peres}},\ }\href@noop {} {\emph {\bibinfo {title} {Quantum Theory: Concepts
  and Methods}}}\ (\bibinfo  {publisher} {Kluwer Academic Publishers,
  Dordrecht},\ \bibinfo {year} {1993})\BibitemShut {NoStop}%
\bibitem [{\citenamefont {Budroni}\ \emph {et~al.}(2022)\citenamefont
  {Budroni}, \citenamefont {Cabello}, \citenamefont {G\"uhne}, \citenamefont
  {Kleinmann},\ and\ \citenamefont {Larsson}}]{Budroni.etal2022}%
  \BibitemOpen
  \bibfield  {author} {\bibinfo {author} {\bibfnamefont {C.}~\bibnamefont
  {Budroni}}, \bibinfo {author} {\bibfnamefont {A.}~\bibnamefont {Cabello}},
  \bibinfo {author} {\bibfnamefont {O.}~\bibnamefont {G\"uhne}}, \bibinfo
  {author} {\bibfnamefont {M.}~\bibnamefont {Kleinmann}},\ and\ \bibinfo
  {author} {\bibfnamefont {J.-A.}\ \bibnamefont {Larsson}},\ }\bibfield
  {title} {\bibinfo {title} {{Kochen}-{Specker} contextuality},\ }\href
  {https://doi.org/10.1103/RevModPhys.94.045007} {\bibfield  {journal}
  {\bibinfo  {journal} {Rev. Mod. Phys.}\ }\textbf {\bibinfo {volume} {94}},\
  \bibinfo {pages} {045007} (\bibinfo {year} {2022})}\BibitemShut {NoStop}%
\bibitem [{\citenamefont {{Howard}}\ \emph {et~al.}(2014)\citenamefont
  {{Howard}}, \citenamefont {{Wallman}}, \citenamefont {{Veitch}},\ and\
  \citenamefont {{Emerson}}}]{Howard.etal2014}%
  \BibitemOpen
  \bibfield  {author} {\bibinfo {author} {\bibfnamefont {M.}~\bibnamefont
  {{Howard}}}, \bibinfo {author} {\bibfnamefont {J.}~\bibnamefont {{Wallman}}},
  \bibinfo {author} {\bibfnamefont {V.}~\bibnamefont {{Veitch}}},\ and\
  \bibinfo {author} {\bibfnamefont {J.}~\bibnamefont {{Emerson}}},\ }\bibfield
  {title} {\bibinfo {title} {Contextuality supplies the `magic' for quantum
  computation},\ }\href {https://doi.org/10.1038/nature13460} {\bibfield
  {journal} {\bibinfo  {journal} {Nature (London)}\ }\textbf {\bibinfo {volume}
  {510}},\ \bibinfo {pages} {351} (\bibinfo {year} {2014})}\BibitemShut
  {NoStop}%
\bibitem [{\citenamefont {Bravyi}\ \emph {et~al.}(2018)\citenamefont {Bravyi},
  \citenamefont {Gosset},\ and\ \citenamefont {K\"onig}}]{Bravyi.etal2018}%
  \BibitemOpen
  \bibfield  {author} {\bibinfo {author} {\bibfnamefont {S.}~\bibnamefont
  {Bravyi}}, \bibinfo {author} {\bibfnamefont {D.}~\bibnamefont {Gosset}},\
  and\ \bibinfo {author} {\bibfnamefont {R.}~\bibnamefont {K\"onig}},\
  }\bibfield  {title} {\bibinfo {title} {Quantum advantage with shallow
  circuits},\ }\href {https://doi.org/10.1126/science.aar3106} {\bibfield
  {journal} {\bibinfo  {journal} {Science}\ }\textbf {\bibinfo {volume}
  {362}},\ \bibinfo {pages} {308} (\bibinfo {year} {2018})}\BibitemShut
  {NoStop}%
\bibitem [{\citenamefont {{Bravyi}}\ \emph {et~al.}(2020)\citenamefont
  {{Bravyi}}, \citenamefont {{Gosset}}, \citenamefont {{K{\"o}nig}},\ and\
  \citenamefont {{Tomamichel}}}]{Bravyi.etal2020}%
  \BibitemOpen
  \bibfield  {author} {\bibinfo {author} {\bibfnamefont {S.}~\bibnamefont
  {{Bravyi}}}, \bibinfo {author} {\bibfnamefont {D.}~\bibnamefont {{Gosset}}},
  \bibinfo {author} {\bibfnamefont {R.}~\bibnamefont {{K{\"o}nig}}},\ and\
  \bibinfo {author} {\bibfnamefont {M.}~\bibnamefont {{Tomamichel}}},\
  }\bibfield  {title} {\bibinfo {title} {{Quantum advantage with noisy shallow
  circuits}},\ }\href {https://doi.org/10.1038/s41567-020-0948-z} {\bibfield
  {journal} {\bibinfo  {journal} {Nat. Phys.}\ }\textbf {\bibinfo {volume}
  {16}},\ \bibinfo {pages} {1040} (\bibinfo {year} {2020})}\BibitemShut
  {NoStop}%
\bibitem [{\citenamefont {Bechmann-Pasquinucci}\ and\ \citenamefont
  {Peres}(2000)}]{BechmannPasquinucci.Peres2000}%
  \BibitemOpen
  \bibfield  {author} {\bibinfo {author} {\bibfnamefont {H.}~\bibnamefont
  {Bechmann-Pasquinucci}}\ and\ \bibinfo {author} {\bibfnamefont
  {A.}~\bibnamefont {Peres}},\ }\bibfield  {title} {\bibinfo {title} {Quantum
  cryptography with 3-state systems},\ }\href
  {https://doi.org/10.1103/PhysRevLett.85.3313} {\bibfield  {journal} {\bibinfo
   {journal} {Phys. Rev. Lett.}\ }\textbf {\bibinfo {volume} {85}},\ \bibinfo
  {pages} {3313} (\bibinfo {year} {2000})}\BibitemShut {NoStop}%
\bibitem [{\citenamefont {{Svozil}}()}]{Svozil2009}%
  \BibitemOpen
  \bibfield  {author} {\bibinfo {author} {\bibfnamefont {K.}~\bibnamefont
  {{Svozil}}},\ }\href@noop {} {\bibinfo {title} {{Bertlmann's chocolate balls
  and quantum type cryptography}}},\ \Eprint {https://arxiv.org/abs/0903.0231}
  {arXiv:0903.0231} \BibitemShut {NoStop}%
\bibitem [{\citenamefont {Abbott}\ \emph {et~al.}(2012)\citenamefont {Abbott},
  \citenamefont {Calude}, \citenamefont {Conder},\ and\ \citenamefont
  {Svozil}}]{Abbott.etal2012}%
  \BibitemOpen
  \bibfield  {author} {\bibinfo {author} {\bibfnamefont {A.~A.}\ \bibnamefont
  {Abbott}}, \bibinfo {author} {\bibfnamefont {C.~S.}\ \bibnamefont {Calude}},
  \bibinfo {author} {\bibfnamefont {J.}~\bibnamefont {Conder}},\ and\ \bibinfo
  {author} {\bibfnamefont {K.}~\bibnamefont {Svozil}},\ }\bibfield  {title}
  {\bibinfo {title} {Strong {Kochen}-{Specker} theorem and incomputability of
  quantum randomness},\ }\href {https://doi.org/10.1103/PhysRevA.86.062109}
  {\bibfield  {journal} {\bibinfo  {journal} {Phys. Rev. A}\ }\textbf {\bibinfo
  {volume} {86}},\ \bibinfo {pages} {062109} (\bibinfo {year}
  {2012})}\BibitemShut {NoStop}%
\bibitem [{\citenamefont {Kulikov}\ \emph {et~al.}(2017)\citenamefont
  {Kulikov}, \citenamefont {Jerger}, \citenamefont {Poto\ifmmode~\check{c}\else
  \v{c}\fi{}nik}, \citenamefont {Wallraff},\ and\ \citenamefont
  {Fedorov}}]{Kulikov.etal2017}%
  \BibitemOpen
  \bibfield  {author} {\bibinfo {author} {\bibfnamefont {A.}~\bibnamefont
  {Kulikov}}, \bibinfo {author} {\bibfnamefont {M.}~\bibnamefont {Jerger}},
  \bibinfo {author} {\bibfnamefont {A.}~\bibnamefont
  {Poto\ifmmode~\check{c}\else \v{c}\fi{}nik}}, \bibinfo {author}
  {\bibfnamefont {A.}~\bibnamefont {Wallraff}},\ and\ \bibinfo {author}
  {\bibfnamefont {A.}~\bibnamefont {Fedorov}},\ }\bibfield  {title} {\bibinfo
  {title} {Realization of a quantum random generator certified with the
  {Kochen}-{Specker} theorem},\ }\href
  {https://doi.org/10.1103/PhysRevLett.119.240501} {\bibfield  {journal}
  {\bibinfo  {journal} {Phys. Rev. Lett.}\ }\textbf {\bibinfo {volume} {119}},\
  \bibinfo {pages} {240501} (\bibinfo {year} {2017})}\BibitemShut {NoStop}%
\bibitem [{\citenamefont {{Um}}\ \emph {et~al.}(2013)\citenamefont {{Um}},
  \citenamefont {{Zhang}}, \citenamefont {{Zhang}}, \citenamefont {{Wang}},
  \citenamefont {{Yangchao}}, \citenamefont {{Deng}}, \citenamefont {{Duan}},\
  and\ \citenamefont {{Kim}}}]{Um.etal2013}%
  \BibitemOpen
  \bibfield  {author} {\bibinfo {author} {\bibfnamefont {M.}~\bibnamefont
  {{Um}}}, \bibinfo {author} {\bibfnamefont {X.}~\bibnamefont {{Zhang}}},
  \bibinfo {author} {\bibfnamefont {J.}~\bibnamefont {{Zhang}}}, \bibinfo
  {author} {\bibfnamefont {Y.}~\bibnamefont {{Wang}}}, \bibinfo {author}
  {\bibfnamefont {S.}~\bibnamefont {{Yangchao}}}, \bibinfo {author}
  {\bibfnamefont {D.-L.}\ \bibnamefont {{Deng}}}, \bibinfo {author}
  {\bibfnamefont {L.-M.}\ \bibnamefont {{Duan}}},\ and\ \bibinfo {author}
  {\bibfnamefont {K.}~\bibnamefont {{Kim}}},\ }\bibfield  {title} {\bibinfo
  {title} {Experimental certification of random numbers via quantum
  contextuality},\ }\href {https://doi.org/10.1038/srep01627} {\bibfield
  {journal} {\bibinfo  {journal} {Sci. Rep.}\ }\textbf {\bibinfo {volume}
  {3}},\ \bibinfo {pages} {1627} (\bibinfo {year} {2013})}\BibitemShut
  {NoStop}%
\bibitem [{\citenamefont {Klyachko}\ \emph {et~al.}(2008)\citenamefont
  {Klyachko}, \citenamefont {Can}, \citenamefont {Binicio\u{g}lu},\ and\
  \citenamefont {Shumovsky}}]{Klyachko.etal2008}%
  \BibitemOpen
  \bibfield  {author} {\bibinfo {author} {\bibfnamefont {A.~A.}\ \bibnamefont
  {Klyachko}}, \bibinfo {author} {\bibfnamefont {M.~A.}\ \bibnamefont {Can}},
  \bibinfo {author} {\bibfnamefont {S.}~\bibnamefont {Binicio\u{g}lu}},\ and\
  \bibinfo {author} {\bibfnamefont {A.~S.}\ \bibnamefont {Shumovsky}},\
  }\bibfield  {title} {\bibinfo {title} {Simple test for hidden variables in
  spin-1 systems},\ }\href {https://doi.org/10.1103/PhysRevLett.101.020403}
  {\bibfield  {journal} {\bibinfo  {journal} {Phys. Rev. Lett.}\ }\textbf
  {\bibinfo {volume} {101}},\ \bibinfo {pages} {020403} (\bibinfo {year}
  {2008})}\BibitemShut {NoStop}%
\bibitem [{\citenamefont {Cabello}(2008)}]{Cabello2008}%
  \BibitemOpen
  \bibfield  {author} {\bibinfo {author} {\bibfnamefont {A.}~\bibnamefont
  {Cabello}},\ }\bibfield  {title} {\bibinfo {title} {Experimentally testable
  state-independent quantum contextuality},\ }\href
  {https://doi.org/10.1103/PhysRevLett.101.210401} {\bibfield  {journal}
  {\bibinfo  {journal} {Phys. Rev. Lett.}\ }\textbf {\bibinfo {volume} {101}},\
  \bibinfo {pages} {210401} (\bibinfo {year} {2008})}\BibitemShut {NoStop}%
\bibitem [{\citenamefont {Badzi\c{a}g}\ \emph {et~al.}(2009)\citenamefont
  {Badzi\c{a}g}, \citenamefont {Bengtsson}, \citenamefont {Cabello},\ and\
  \citenamefont {Pitowsky}}]{Badziag.etal2009}%
  \BibitemOpen
  \bibfield  {author} {\bibinfo {author} {\bibfnamefont {P.}~\bibnamefont
  {Badzi\c{a}g}}, \bibinfo {author} {\bibfnamefont {I.}~\bibnamefont
  {Bengtsson}}, \bibinfo {author} {\bibfnamefont {A.}~\bibnamefont {Cabello}},\
  and\ \bibinfo {author} {\bibfnamefont {I.}~\bibnamefont {Pitowsky}},\
  }\bibfield  {title} {\bibinfo {title} {Universality of state-independent
  violation of correlation inequalities for noncontextual theories},\ }\href
  {https://doi.org/10.1103/PhysRevLett.103.050401} {\bibfield  {journal}
  {\bibinfo  {journal} {Phys. Rev. Lett.}\ }\textbf {\bibinfo {volume} {103}},\
  \bibinfo {pages} {050401} (\bibinfo {year} {2009})}\BibitemShut {NoStop}%
\bibitem [{\citenamefont {Yu}\ and\ \citenamefont {Oh}(2012)}]{Yu.Oh2012}%
  \BibitemOpen
  \bibfield  {author} {\bibinfo {author} {\bibfnamefont {S.}~\bibnamefont
  {Yu}}\ and\ \bibinfo {author} {\bibfnamefont {C.~H.}\ \bibnamefont {Oh}},\
  }\bibfield  {title} {\bibinfo {title} {State-independent proof of
  {K}ochen-{S}pecker theorem with 13 rays},\ }\href
  {https://doi.org/10.1103/PhysRevLett.108.030402} {\bibfield  {journal}
  {\bibinfo  {journal} {Phys. Rev. Lett.}\ }\textbf {\bibinfo {volume} {108}},\
  \bibinfo {pages} {030402} (\bibinfo {year} {2012})}\BibitemShut {NoStop}%
\bibitem [{\citenamefont {Bell}(1966)}]{Bell1966}%
  \BibitemOpen
  \bibfield  {author} {\bibinfo {author} {\bibfnamefont {J.~S.}\ \bibnamefont
  {Bell}},\ }\bibfield  {title} {\bibinfo {title} {On the problem of hidden
  variables in quantum mechanics},\ }\href
  {https://doi.org/10.1103/RevModPhys.38.447} {\bibfield  {journal} {\bibinfo
  {journal} {Rev. Mod. Phys.}\ }\textbf {\bibinfo {volume} {38}},\ \bibinfo
  {pages} {447} (\bibinfo {year} {1966})}\BibitemShut {NoStop}%
\bibitem [{\citenamefont {Ramanathan}\ and\ \citenamefont
  {Horodecki}(2014)}]{Ramanathan.Horodecki2014}%
  \BibitemOpen
  \bibfield  {author} {\bibinfo {author} {\bibfnamefont {R.}~\bibnamefont
  {Ramanathan}}\ and\ \bibinfo {author} {\bibfnamefont {P.}~\bibnamefont
  {Horodecki}},\ }\bibfield  {title} {\bibinfo {title} {Necessary and
  sufficient condition for state-independent contextual measurement
  scenarios},\ }\href {https://doi.org/10.1103/PhysRevLett.112.040404}
  {\bibfield  {journal} {\bibinfo  {journal} {Phys. Rev. Lett.}\ }\textbf
  {\bibinfo {volume} {112}},\ \bibinfo {pages} {040404} (\bibinfo {year}
  {2014})}\BibitemShut {NoStop}%
\bibitem [{\citenamefont {Cabello}\ \emph {et~al.}(2015)\citenamefont
  {Cabello}, \citenamefont {Kleinmann},\ and\ \citenamefont
  {Budroni}}]{Cabello.etal2015}%
  \BibitemOpen
  \bibfield  {author} {\bibinfo {author} {\bibfnamefont {A.}~\bibnamefont
  {Cabello}}, \bibinfo {author} {\bibfnamefont {M.}~\bibnamefont {Kleinmann}},\
  and\ \bibinfo {author} {\bibfnamefont {C.}~\bibnamefont {Budroni}},\
  }\bibfield  {title} {\bibinfo {title} {Necessary and sufficient condition for
  quantum state-independent contextuality},\ }\href
  {https://doi.org/10.1103/PhysRevLett.114.250402} {\bibfield  {journal}
  {\bibinfo  {journal} {Phys. Rev. Lett.}\ }\textbf {\bibinfo {volume} {114}},\
  \bibinfo {pages} {250402} (\bibinfo {year} {2015})}\BibitemShut {NoStop}%
\bibitem [{\citenamefont {G\"uhne}\ \emph {et~al.}(2014)\citenamefont
  {G\"uhne}, \citenamefont {Budroni}, \citenamefont {Cabello}, \citenamefont
  {Kleinmann},\ and\ \citenamefont {Larsson}}]{Guehne.etal2014}%
  \BibitemOpen
  \bibfield  {author} {\bibinfo {author} {\bibfnamefont {O.}~\bibnamefont
  {G\"uhne}}, \bibinfo {author} {\bibfnamefont {C.}~\bibnamefont {Budroni}},
  \bibinfo {author} {\bibfnamefont {A.}~\bibnamefont {Cabello}}, \bibinfo
  {author} {\bibfnamefont {M.}~\bibnamefont {Kleinmann}},\ and\ \bibinfo
  {author} {\bibfnamefont {J.-A.}\ \bibnamefont {Larsson}},\ }\bibfield
  {title} {\bibinfo {title} {Bounding the quantum dimension with
  contextuality},\ }\href {https://doi.org/10.1103/PhysRevA.89.062107}
  {\bibfield  {journal} {\bibinfo  {journal} {Phys. Rev. A}\ }\textbf {\bibinfo
  {volume} {89}},\ \bibinfo {pages} {062107} (\bibinfo {year}
  {2014})}\BibitemShut {NoStop}%
\bibitem [{\citenamefont {Ray}\ \emph {et~al.}(2021)\citenamefont {Ray},
  \citenamefont {Boddu}, \citenamefont {Bharti}, \citenamefont {Kwek},\ and\
  \citenamefont {Cabello}}]{Ray.etal2021}%
  \BibitemOpen
  \bibfield  {author} {\bibinfo {author} {\bibfnamefont {M.}~\bibnamefont
  {Ray}}, \bibinfo {author} {\bibfnamefont {N.~G.}\ \bibnamefont {Boddu}},
  \bibinfo {author} {\bibfnamefont {K.}~\bibnamefont {Bharti}}, \bibinfo
  {author} {\bibfnamefont {L.-C.}\ \bibnamefont {Kwek}},\ and\ \bibinfo
  {author} {\bibfnamefont {A.}~\bibnamefont {Cabello}},\ }\bibfield  {title}
  {\bibinfo {title} {Graph-theoretic approach to dimension witnessing},\ }\href
  {https://doi.org/10.1088/1367-2630/abcacd} {\bibfield  {journal} {\bibinfo
  {journal} {New J. Phys.}\ }\textbf {\bibinfo {volume} {23}},\ \bibinfo
  {pages} {033006} (\bibinfo {year} {2021})}\BibitemShut {NoStop}%
\bibitem [{\citenamefont {Kurzy\'nski}\ and\ \citenamefont
  {Kaszlikowski}(2012)}]{Kurzynski.Kaszlikowski2012}%
  \BibitemOpen
  \bibfield  {author} {\bibinfo {author} {\bibfnamefont {P.}~\bibnamefont
  {Kurzy\'nski}}\ and\ \bibinfo {author} {\bibfnamefont {D.}~\bibnamefont
  {Kaszlikowski}},\ }\bibfield  {title} {\bibinfo {title} {Contextuality of
  almost all qutrit states can be revealed with nine observables},\ }\href
  {https://doi.org/10.1103/PhysRevA.86.042125} {\bibfield  {journal} {\bibinfo
  {journal} {Phys. Rev. A}\ }\textbf {\bibinfo {volume} {86}},\ \bibinfo
  {pages} {042125} (\bibinfo {year} {2012})}\BibitemShut {NoStop}%
\bibitem [{\citenamefont {Cabello}\ \emph {et~al.}(2014)\citenamefont
  {Cabello}, \citenamefont {Severini},\ and\ \citenamefont
  {Winter}}]{Cabello.etal2014}%
  \BibitemOpen
  \bibfield  {author} {\bibinfo {author} {\bibfnamefont {A.}~\bibnamefont
  {Cabello}}, \bibinfo {author} {\bibfnamefont {S.}~\bibnamefont {Severini}},\
  and\ \bibinfo {author} {\bibfnamefont {A.}~\bibnamefont {Winter}},\
  }\bibfield  {title} {\bibinfo {title} {Graph-theoretic approach to quantum
  correlations},\ }\href {https://doi.org/10.1103/PhysRevLett.112.040401}
  {\bibfield  {journal} {\bibinfo  {journal} {Phys. Rev. Lett.}\ }\textbf
  {\bibinfo {volume} {112}},\ \bibinfo {pages} {040401} (\bibinfo {year}
  {2014})}\BibitemShut {NoStop}%
\bibitem [{\citenamefont {Lov{\'a}sz}(1979)}]{Lovasz1979}%
  \BibitemOpen
  \bibfield  {author} {\bibinfo {author} {\bibfnamefont {L.}~\bibnamefont
  {Lov{\'a}sz}},\ }\bibfield  {title} {\bibinfo {title} {On the {Shannon}
  capacity of a graph},\ }\href {https://doi.org/10.1109/TIT.1979.1055985}
  {\bibfield  {journal} {\bibinfo  {journal} {IEEE Trans. Inf. Theory}\
  }\textbf {\bibinfo {volume} {25}},\ \bibinfo {pages} {1} (\bibinfo {year}
  {1979})}\BibitemShut {NoStop}%
\bibitem [{Note1()}]{Note1}%
  \BibitemOpen
  \bibinfo {note} {In graph theory, $(\ket {\psi _1},\ket {\psi _2},\protect
  \dots ,\ket {\psi _n})$ is called an orthonormal representation of the
  complement graph $\protect \overline {G}$.}\BibitemShut {Stop}%
\bibitem [{\citenamefont {Yu}\ and\ \citenamefont {Tong}(2014)}]{Yu.Tong2014}%
  \BibitemOpen
  \bibfield  {author} {\bibinfo {author} {\bibfnamefont {X.-D.}\ \bibnamefont
  {Yu}}\ and\ \bibinfo {author} {\bibfnamefont {D.~M.}\ \bibnamefont {Tong}},\
  }\bibfield  {title} {\bibinfo {title} {Coexistence of {Kochen}-{Specker}
  inequalities and noncontextuality inequalities},\ }\href
  {https://doi.org/10.1103/PhysRevA.89.010101} {\bibfield  {journal} {\bibinfo
  {journal} {Phys. Rev. A}\ }\textbf {\bibinfo {volume} {89}},\ \bibinfo
  {pages} {010101} (\bibinfo {year} {2014})}\BibitemShut {NoStop}%
\bibitem [{\citenamefont {Cabello}(2016)}]{Cabello2016}%
  \BibitemOpen
  \bibfield  {author} {\bibinfo {author} {\bibfnamefont {A.}~\bibnamefont
  {Cabello}},\ }\bibfield  {title} {\bibinfo {title} {Simple method for
  experimentally testing any form of quantum contextuality},\ }\href
  {https://doi.org/10.1103/PhysRevA.93.032102} {\bibfield  {journal} {\bibinfo
  {journal} {Phys. Rev. A}\ }\textbf {\bibinfo {volume} {93}},\ \bibinfo
  {pages} {032102} (\bibinfo {year} {2016})}\BibitemShut {NoStop}%
\bibitem [{\citenamefont {Gr\"otschel}\ \emph {et~al.}(1986)\citenamefont
  {Gr\"otschel}, \citenamefont {Lov\'asz},\ and\ \citenamefont
  {Schrijver}}]{Groetschel.etal1986}%
  \BibitemOpen
  \bibfield  {author} {\bibinfo {author} {\bibfnamefont {M.}~\bibnamefont
  {Gr\"otschel}}, \bibinfo {author} {\bibfnamefont {L.}~\bibnamefont
  {Lov\'asz}},\ and\ \bibinfo {author} {\bibfnamefont {A.}~\bibnamefont
  {Schrijver}},\ }\bibfield  {title} {\bibinfo {title} {Relaxations of vertex
  packing},\ }\href {https://doi.org/10.1016/0095-8956(86)90087-0} {\bibfield
  {journal} {\bibinfo  {journal} {J. Combin. Theory B}\ }\textbf {\bibinfo
  {volume} {40}},\ \bibinfo {pages} {330} (\bibinfo {year} {1986})}\BibitemShut
  {NoStop}%
\bibitem [{\citenamefont {Yu}\ \emph {et~al.}(2021)\citenamefont {Yu},
  \citenamefont {Simnacher}, \citenamefont {Wyderka}, \citenamefont {Nguyen},\
  and\ \citenamefont {G{\"u}hne}}]{Yu.etal2021}%
  \BibitemOpen
  \bibfield  {author} {\bibinfo {author} {\bibfnamefont {X.-D.}\ \bibnamefont
  {Yu}}, \bibinfo {author} {\bibfnamefont {T.}~\bibnamefont {Simnacher}},
  \bibinfo {author} {\bibfnamefont {N.}~\bibnamefont {Wyderka}}, \bibinfo
  {author} {\bibfnamefont {H.~C.}\ \bibnamefont {Nguyen}},\ and\ \bibinfo
  {author} {\bibfnamefont {O.}~\bibnamefont {G{\"u}hne}},\ }\bibfield  {title}
  {\bibinfo {title} {A complete hierarchy for the pure state marginal problem
  in quantum mechanics},\ }\href {https://doi.org/10.1038/s41467-020-20799-5}
  {\bibfield  {journal} {\bibinfo  {journal} {Nat. Commun.}\ }\textbf {\bibinfo
  {volume} {12}},\ \bibinfo {pages} {1012} (\bibinfo {year}
  {2021})}\BibitemShut {NoStop}%
\bibitem [{\citenamefont {Yu}\ \emph {et~al.}(2022)\citenamefont {Yu},
  \citenamefont {Simnacher}, \citenamefont {Nguyen},\ and\ \citenamefont
  {G\"uhne}}]{Yu.etal2022}%
  \BibitemOpen
  \bibfield  {author} {\bibinfo {author} {\bibfnamefont {X.-D.}\ \bibnamefont
  {Yu}}, \bibinfo {author} {\bibfnamefont {T.}~\bibnamefont {Simnacher}},
  \bibinfo {author} {\bibfnamefont {H.~C.}\ \bibnamefont {Nguyen}},\ and\
  \bibinfo {author} {\bibfnamefont {O.}~\bibnamefont {G\"uhne}},\ }\bibfield
  {title} {\bibinfo {title} {Quantum-inspired hierarchy for rank-constrained
  optimization},\ }\href {https://doi.org/10.1103/PRXQuantum.3.010340}
  {\bibfield  {journal} {\bibinfo  {journal} {PRX Quantum}\ }\textbf {\bibinfo
  {volume} {3}},\ \bibinfo {pages} {010340} (\bibinfo {year}
  {2022})}\BibitemShut {NoStop}%
\bibitem [{\citenamefont {Davenport}\ and\ \citenamefont
  {Romberg}(2016)}]{Davenport.Romberg2016}%
  \BibitemOpen
  \bibfield  {author} {\bibinfo {author} {\bibfnamefont {M.~A.}\ \bibnamefont
  {Davenport}}\ and\ \bibinfo {author} {\bibfnamefont {J.}~\bibnamefont
  {Romberg}},\ }\bibfield  {title} {\bibinfo {title} {An overview of low-rank
  matrix recovery from incomplete observations},\ }\href
  {https://doi.org/10.1109/JSTSP.2016.2539100} {\bibfield  {journal} {\bibinfo
  {journal} {IEEE J. Sel. Top. Signal Process.}\ }\textbf {\bibinfo {volume}
  {10}},\ \bibinfo {pages} {608} (\bibinfo {year} {2016})}\BibitemShut
  {NoStop}%
\bibitem [{\citenamefont {Bowles}\ \emph {et~al.}(2014)\citenamefont {Bowles},
  \citenamefont {Quintino},\ and\ \citenamefont {Brunner}}]{Bowles.etal2014}%
  \BibitemOpen
  \bibfield  {author} {\bibinfo {author} {\bibfnamefont {J.}~\bibnamefont
  {Bowles}}, \bibinfo {author} {\bibfnamefont {M.~T.}\ \bibnamefont
  {Quintino}},\ and\ \bibinfo {author} {\bibfnamefont {N.}~\bibnamefont
  {Brunner}},\ }\bibfield  {title} {\bibinfo {title} {Certifying the dimension
  of classical and quantum systems in a prepare-and-measure scenario with
  independent devices},\ }\href
  {https://doi.org/10.1103/PhysRevLett.112.140407} {\bibfield  {journal}
  {\bibinfo  {journal} {Phys. Rev. Lett.}\ }\textbf {\bibinfo {volume} {112}},\
  \bibinfo {pages} {140407} (\bibinfo {year} {2014})}\BibitemShut {NoStop}%
\bibitem [{\citenamefont {Donohue}\ and\ \citenamefont
  {Wolfe}(2015)}]{Donohue.Wolfe2015}%
  \BibitemOpen
  \bibfield  {author} {\bibinfo {author} {\bibfnamefont {J.~M.}\ \bibnamefont
  {Donohue}}\ and\ \bibinfo {author} {\bibfnamefont {E.}~\bibnamefont
  {Wolfe}},\ }\bibfield  {title} {\bibinfo {title} {Identifying nonconvexity in
  the sets of limited-dimension quantum correlations},\ }\href
  {https://doi.org/10.1103/PhysRevA.92.062120} {\bibfield  {journal} {\bibinfo
  {journal} {Phys. Rev. A}\ }\textbf {\bibinfo {volume} {92}},\ \bibinfo
  {pages} {062120} (\bibinfo {year} {2015})}\BibitemShut {NoStop}%
\bibitem [{\citenamefont {Sikora}\ \emph {et~al.}(2016)\citenamefont {Sikora},
  \citenamefont {Varvitsiotis},\ and\ \citenamefont {Wei}}]{Sikora.etal2016}%
  \BibitemOpen
  \bibfield  {author} {\bibinfo {author} {\bibfnamefont {J.}~\bibnamefont
  {Sikora}}, \bibinfo {author} {\bibfnamefont {A.}~\bibnamefont
  {Varvitsiotis}},\ and\ \bibinfo {author} {\bibfnamefont {Z.}~\bibnamefont
  {Wei}},\ }\bibfield  {title} {\bibinfo {title} {Minimum dimension of a
  {H}ilbert space needed to generate a quantum correlation},\ }\href
  {https://doi.org/10.1103/PhysRevLett.117.060401} {\bibfield  {journal}
  {\bibinfo  {journal} {Phys. Rev. Lett.}\ }\textbf {\bibinfo {volume} {117}},\
  \bibinfo {pages} {060401} (\bibinfo {year} {2016})}\BibitemShut {NoStop}%
\bibitem [{\citenamefont {Mao}\ \emph {et~al.}(2022)\citenamefont {Mao},
  \citenamefont {Spee}, \citenamefont {Xu},\ and\ \citenamefont
  {G\"uhne}}]{Mao.etal2022}%
  \BibitemOpen
  \bibfield  {author} {\bibinfo {author} {\bibfnamefont {Y.}~\bibnamefont
  {Mao}}, \bibinfo {author} {\bibfnamefont {C.}~\bibnamefont {Spee}}, \bibinfo
  {author} {\bibfnamefont {Z.-P.}\ \bibnamefont {Xu}},\ and\ \bibinfo {author}
  {\bibfnamefont {O.}~\bibnamefont {G\"uhne}},\ }\bibfield  {title} {\bibinfo
  {title} {Structure of dimension-bounded temporal correlations},\ }\href
  {https://doi.org/10.1103/PhysRevA.105.L020201} {\bibfield  {journal}
  {\bibinfo  {journal} {Phys. Rev. A}\ }\textbf {\bibinfo {volume} {105}},\
  \bibinfo {pages} {L020201} (\bibinfo {year} {2022})}\BibitemShut {NoStop}%
\bibitem [{\citenamefont {Ali-Khan}\ \emph {et~al.}(2007)\citenamefont
  {Ali-Khan}, \citenamefont {Broadbent},\ and\ \citenamefont
  {Howell}}]{Ali-Khan.etal2007}%
  \BibitemOpen
  \bibfield  {author} {\bibinfo {author} {\bibfnamefont {I.}~\bibnamefont
  {Ali-Khan}}, \bibinfo {author} {\bibfnamefont {C.~J.}\ \bibnamefont
  {Broadbent}},\ and\ \bibinfo {author} {\bibfnamefont {J.~C.}\ \bibnamefont
  {Howell}},\ }\bibfield  {title} {\bibinfo {title} {Large-alphabet quantum key
  distribution using energy-time entangled bipartite states},\ }\href
  {https://doi.org/10.1103/PhysRevLett.98.060503} {\bibfield  {journal}
  {\bibinfo  {journal} {Phys. Rev. Lett.}\ }\textbf {\bibinfo {volume} {98}},\
  \bibinfo {pages} {060503} (\bibinfo {year} {2007})}\BibitemShut {NoStop}%
\bibitem [{\citenamefont {Neeley}\ \emph {et~al.}(2009)\citenamefont {Neeley},
  \citenamefont {Ansmann}, \citenamefont {Bialczak}, \citenamefont {Hofheinz},
  \citenamefont {Lucero}, \citenamefont {O{\textquoteright}Connell},
  \citenamefont {Sank}, \citenamefont {Wang}, \citenamefont {Wenner},
  \citenamefont {Cleland}, \citenamefont {Geller},\ and\ \citenamefont
  {Martinis}}]{Neeley.etal2009}%
  \BibitemOpen
  \bibfield  {author} {\bibinfo {author} {\bibfnamefont {M.}~\bibnamefont
  {Neeley}}, \bibinfo {author} {\bibfnamefont {M.}~\bibnamefont {Ansmann}},
  \bibinfo {author} {\bibfnamefont {R.~C.}\ \bibnamefont {Bialczak}}, \bibinfo
  {author} {\bibfnamefont {M.}~\bibnamefont {Hofheinz}}, \bibinfo {author}
  {\bibfnamefont {E.}~\bibnamefont {Lucero}}, \bibinfo {author} {\bibfnamefont
  {A.~D.}\ \bibnamefont {O{\textquoteright}Connell}}, \bibinfo {author}
  {\bibfnamefont {D.}~\bibnamefont {Sank}}, \bibinfo {author} {\bibfnamefont
  {H.}~\bibnamefont {Wang}}, \bibinfo {author} {\bibfnamefont {J.}~\bibnamefont
  {Wenner}}, \bibinfo {author} {\bibfnamefont {A.~N.}\ \bibnamefont {Cleland}},
  \bibinfo {author} {\bibfnamefont {M.~R.}\ \bibnamefont {Geller}},\ and\
  \bibinfo {author} {\bibfnamefont {J.~M.}\ \bibnamefont {Martinis}},\
  }\bibfield  {title} {\bibinfo {title} {Emulation of a quantum spin with a
  superconducting phase qudit},\ }\href
  {https://doi.org/10.1126/science.1173440} {\bibfield  {journal} {\bibinfo
  {journal} {Science}\ }\textbf {\bibinfo {volume} {325}},\ \bibinfo {pages}
  {722} (\bibinfo {year} {2009})}\BibitemShut {NoStop}%
\bibitem [{\citenamefont {{Lanyon}}\ \emph {et~al.}(2009)\citenamefont
  {{Lanyon}}, \citenamefont {{Barbieri}}, \citenamefont {{Almeida}},
  \citenamefont {{Jennewein}}, \citenamefont {{Ralph}}, \citenamefont
  {{Resch}}, \citenamefont {{Pryde}}, \citenamefont {{O'Brien}}, \citenamefont
  {{Gilchrist}},\ and\ \citenamefont {{White}}}]{Lanyon.etal2009}%
  \BibitemOpen
  \bibfield  {author} {\bibinfo {author} {\bibfnamefont {B.~P.}\ \bibnamefont
  {{Lanyon}}}, \bibinfo {author} {\bibfnamefont {M.}~\bibnamefont
  {{Barbieri}}}, \bibinfo {author} {\bibfnamefont {M.~P.}\ \bibnamefont
  {{Almeida}}}, \bibinfo {author} {\bibfnamefont {T.}~\bibnamefont
  {{Jennewein}}}, \bibinfo {author} {\bibfnamefont {T.~C.}\ \bibnamefont
  {{Ralph}}}, \bibinfo {author} {\bibfnamefont {K.~J.}\ \bibnamefont
  {{Resch}}}, \bibinfo {author} {\bibfnamefont {G.~J.}\ \bibnamefont
  {{Pryde}}}, \bibinfo {author} {\bibfnamefont {J.~L.}\ \bibnamefont
  {{O'Brien}}}, \bibinfo {author} {\bibfnamefont {A.}~\bibnamefont
  {{Gilchrist}}},\ and\ \bibinfo {author} {\bibfnamefont {A.~G.}\ \bibnamefont
  {{White}}},\ }\bibfield  {title} {\bibinfo {title} {Simplifying quantum logic
  using higher-dimensional {H}ilbert spaces},\ }\href
  {https://doi.org/10.1038/nphys1150} {\bibfield  {journal} {\bibinfo
  {journal} {Nat. Phys.}\ }\textbf {\bibinfo {volume} {5}},\ \bibinfo {pages}
  {134} (\bibinfo {year} {2009})}\BibitemShut {NoStop}%
\bibitem [{\citenamefont {{Dada}}\ \emph {et~al.}(2011)\citenamefont {{Dada}},
  \citenamefont {{Leach}}, \citenamefont {{Buller}}, \citenamefont
  {{Padgett}},\ and\ \citenamefont {{Andersson}}}]{Dada.etal2011}%
  \BibitemOpen
  \bibfield  {author} {\bibinfo {author} {\bibfnamefont {A.~C.}\ \bibnamefont
  {{Dada}}}, \bibinfo {author} {\bibfnamefont {J.}~\bibnamefont {{Leach}}},
  \bibinfo {author} {\bibfnamefont {G.~S.}\ \bibnamefont {{Buller}}}, \bibinfo
  {author} {\bibfnamefont {M.~J.}\ \bibnamefont {{Padgett}}},\ and\ \bibinfo
  {author} {\bibfnamefont {E.}~\bibnamefont {{Andersson}}},\ }\bibfield
  {title} {\bibinfo {title} {Experimental high-dimensional two-photon
  entanglement and violations of generalized {Bell} inequalities},\ }\href
  {https://doi.org/10.1038/nphys1996} {\bibfield  {journal} {\bibinfo
  {journal} {Nat. Phys.}\ }\textbf {\bibinfo {volume} {7}},\ \bibinfo {pages}
  {677} (\bibinfo {year} {2011})}\BibitemShut {NoStop}%
\bibitem [{\citenamefont {{Kues}}\ \emph {et~al.}(2017)\citenamefont {{Kues}},
  \citenamefont {{Reimer}}, \citenamefont {{Roztocki}}, \citenamefont
  {{Cort{\'e}s}}, \citenamefont {{Sciara}}, \citenamefont {{Wetzel}},
  \citenamefont {{Zhang}}, \citenamefont {{Cino}}, \citenamefont {{Chu}},\ and\
  \citenamefont {{Little}}}]{Kues.etal2017}%
  \BibitemOpen
  \bibfield  {author} {\bibinfo {author} {\bibfnamefont {M.}~\bibnamefont
  {{Kues}}}, \bibinfo {author} {\bibfnamefont {C.}~\bibnamefont {{Reimer}}},
  \bibinfo {author} {\bibfnamefont {P.}~\bibnamefont {{Roztocki}}}, \bibinfo
  {author} {\bibfnamefont {L.~R.}\ \bibnamefont {{Cort{\'e}s}}}, \bibinfo
  {author} {\bibfnamefont {S.}~\bibnamefont {{Sciara}}}, \bibinfo {author}
  {\bibfnamefont {B.}~\bibnamefont {{Wetzel}}}, \bibinfo {author}
  {\bibfnamefont {Y.}~\bibnamefont {{Zhang}}}, \bibinfo {author} {\bibfnamefont
  {A.}~\bibnamefont {{Cino}}}, \bibinfo {author} {\bibfnamefont {S.~T.}\
  \bibnamefont {{Chu}}},\ and\ \bibinfo {author} {\bibfnamefont {B.~E.}\
  \bibnamefont {{Little}}},\ }\bibfield  {title} {\bibinfo {title} {On-chip
  generation of high-dimensional entangled quantum states and their coherent
  control},\ }\href {https://doi.org/10.1038/nature22986} {\bibfield  {journal}
  {\bibinfo  {journal} {Nature}\ }\textbf {\bibinfo {volume} {546}},\ \bibinfo
  {pages} {622} (\bibinfo {year} {2017})}\BibitemShut {NoStop}%
\bibitem [{\citenamefont {Wang}\ \emph {et~al.}(2018)\citenamefont {Wang},
  \citenamefont {Paesani}, \citenamefont {Ding}, \citenamefont {Santagati},
  \citenamefont {Skrzypczyk}, \citenamefont {Salavrakos}, \citenamefont {Tura},
  \citenamefont {Augusiak}, \citenamefont {Man{\v c}inska}, \citenamefont
  {Bacco}, \citenamefont {Bonneau}, \citenamefont {Silverstone}, \citenamefont
  {Gong}, \citenamefont {Ac{\'\i}n}, \citenamefont {Rottwitt}, \citenamefont
  {Oxenl{\o}we}, \citenamefont {O{\textquoteright}Brien}, \citenamefont
  {Laing},\ and\ \citenamefont {Thompson}}]{Wang.etal2018}%
  \BibitemOpen
  \bibfield  {author} {\bibinfo {author} {\bibfnamefont {J.}~\bibnamefont
  {Wang}}, \bibinfo {author} {\bibfnamefont {S.}~\bibnamefont {Paesani}},
  \bibinfo {author} {\bibfnamefont {Y.}~\bibnamefont {Ding}}, \bibinfo {author}
  {\bibfnamefont {R.}~\bibnamefont {Santagati}}, \bibinfo {author}
  {\bibfnamefont {P.}~\bibnamefont {Skrzypczyk}}, \bibinfo {author}
  {\bibfnamefont {A.}~\bibnamefont {Salavrakos}}, \bibinfo {author}
  {\bibfnamefont {J.}~\bibnamefont {Tura}}, \bibinfo {author} {\bibfnamefont
  {R.}~\bibnamefont {Augusiak}}, \bibinfo {author} {\bibfnamefont
  {L.}~\bibnamefont {Man{\v c}inska}}, \bibinfo {author} {\bibfnamefont
  {D.}~\bibnamefont {Bacco}}, \bibinfo {author} {\bibfnamefont
  {D.}~\bibnamefont {Bonneau}}, \bibinfo {author} {\bibfnamefont {J.~W.}\
  \bibnamefont {Silverstone}}, \bibinfo {author} {\bibfnamefont
  {Q.}~\bibnamefont {Gong}}, \bibinfo {author} {\bibfnamefont {A.}~\bibnamefont
  {Ac{\'\i}n}}, \bibinfo {author} {\bibfnamefont {K.}~\bibnamefont {Rottwitt}},
  \bibinfo {author} {\bibfnamefont {L.~K.}\ \bibnamefont {Oxenl{\o}we}},
  \bibinfo {author} {\bibfnamefont {J.~L.}\ \bibnamefont
  {O{\textquoteright}Brien}}, \bibinfo {author} {\bibfnamefont
  {A.}~\bibnamefont {Laing}},\ and\ \bibinfo {author} {\bibfnamefont {M.~G.}\
  \bibnamefont {Thompson}},\ }\bibfield  {title} {\bibinfo {title}
  {Multidimensional quantum entanglement with large-scale integrated optics},\
  }\href {https://doi.org/10.1126/science.aar7053} {\bibfield  {journal}
  {\bibinfo  {journal} {Science}\ }\textbf {\bibinfo {volume} {360}},\ \bibinfo
  {pages} {285} (\bibinfo {year} {2018})}\BibitemShut {NoStop}%
\bibitem [{\citenamefont {Ecker}\ \emph {et~al.}(2019)\citenamefont {Ecker},
  \citenamefont {Bouchard}, \citenamefont {Bulla}, \citenamefont {Brandt},
  \citenamefont {Kohout}, \citenamefont {Steinlechner}, \citenamefont
  {Fickler}, \citenamefont {Malik}, \citenamefont {Guryanova}, \citenamefont
  {Ursin},\ and\ \citenamefont {Huber}}]{Ecker.etal2019}%
  \BibitemOpen
  \bibfield  {author} {\bibinfo {author} {\bibfnamefont {S.}~\bibnamefont
  {Ecker}}, \bibinfo {author} {\bibfnamefont {F.}~\bibnamefont {Bouchard}},
  \bibinfo {author} {\bibfnamefont {L.}~\bibnamefont {Bulla}}, \bibinfo
  {author} {\bibfnamefont {F.}~\bibnamefont {Brandt}}, \bibinfo {author}
  {\bibfnamefont {O.}~\bibnamefont {Kohout}}, \bibinfo {author} {\bibfnamefont
  {F.}~\bibnamefont {Steinlechner}}, \bibinfo {author} {\bibfnamefont
  {R.}~\bibnamefont {Fickler}}, \bibinfo {author} {\bibfnamefont
  {M.}~\bibnamefont {Malik}}, \bibinfo {author} {\bibfnamefont
  {Y.}~\bibnamefont {Guryanova}}, \bibinfo {author} {\bibfnamefont
  {R.}~\bibnamefont {Ursin}},\ and\ \bibinfo {author} {\bibfnamefont
  {M.}~\bibnamefont {Huber}},\ }\bibfield  {title} {\bibinfo {title}
  {Overcoming noise in entanglement distribution},\ }\href
  {https://doi.org/10.1103/PhysRevX.9.041042} {\bibfield  {journal} {\bibinfo
  {journal} {Phys. Rev. X}\ }\textbf {\bibinfo {volume} {9}},\ \bibinfo {pages}
  {041042} (\bibinfo {year} {2019})}\BibitemShut {NoStop}%
\bibitem [{\citenamefont {Kong}\ \emph {et~al.}(2023)\citenamefont {Kong},
  \citenamefont {Sun}, \citenamefont {Zhang}, \citenamefont {Zhang},\ and\
  \citenamefont {Zhang}}]{Kong.etal2023}%
  \BibitemOpen
  \bibfield  {author} {\bibinfo {author} {\bibfnamefont {L.-J.}\ \bibnamefont
  {Kong}}, \bibinfo {author} {\bibfnamefont {Y.}~\bibnamefont {Sun}}, \bibinfo
  {author} {\bibfnamefont {F.}~\bibnamefont {Zhang}}, \bibinfo {author}
  {\bibfnamefont {J.}~\bibnamefont {Zhang}},\ and\ \bibinfo {author}
  {\bibfnamefont {X.}~\bibnamefont {Zhang}},\ }\bibfield  {title} {\bibinfo
  {title} {High-dimensional entanglement-enabled holography},\ }\href
  {https://doi.org/10.1103/PhysRevLett.130.053602} {\bibfield  {journal}
  {\bibinfo  {journal} {Phys. Rev. Lett.}\ }\textbf {\bibinfo {volume} {130}},\
  \bibinfo {pages} {053602} (\bibinfo {year} {2023})}\BibitemShut {NoStop}%
\bibitem [{\citenamefont {Brunner}\ \emph {et~al.}(2008)\citenamefont
  {Brunner}, \citenamefont {Pironio}, \citenamefont {Acin}, \citenamefont
  {Gisin}, \citenamefont {M\'ethot},\ and\ \citenamefont
  {Scarani}}]{Brunner.etal2008}%
  \BibitemOpen
  \bibfield  {author} {\bibinfo {author} {\bibfnamefont {N.}~\bibnamefont
  {Brunner}}, \bibinfo {author} {\bibfnamefont {S.}~\bibnamefont {Pironio}},
  \bibinfo {author} {\bibfnamefont {A.}~\bibnamefont {Acin}}, \bibinfo {author}
  {\bibfnamefont {N.}~\bibnamefont {Gisin}}, \bibinfo {author} {\bibfnamefont
  {A.~A.}\ \bibnamefont {M\'ethot}},\ and\ \bibinfo {author} {\bibfnamefont
  {V.}~\bibnamefont {Scarani}},\ }\bibfield  {title} {\bibinfo {title} {Testing
  the dimension of {H}ilbert spaces},\ }\href
  {https://doi.org/10.1103/PhysRevLett.100.210503} {\bibfield  {journal}
  {\bibinfo  {journal} {Phys. Rev. Lett.}\ }\textbf {\bibinfo {volume} {100}},\
  \bibinfo {pages} {210503} (\bibinfo {year} {2008})}\BibitemShut {NoStop}%
\bibitem [{\citenamefont {Gallego}\ \emph {et~al.}(2010)\citenamefont
  {Gallego}, \citenamefont {Brunner}, \citenamefont {Hadley},\ and\
  \citenamefont {Ac\'{\i}n}}]{Gallego.etal2010}%
  \BibitemOpen
  \bibfield  {author} {\bibinfo {author} {\bibfnamefont {R.}~\bibnamefont
  {Gallego}}, \bibinfo {author} {\bibfnamefont {N.}~\bibnamefont {Brunner}},
  \bibinfo {author} {\bibfnamefont {C.}~\bibnamefont {Hadley}},\ and\ \bibinfo
  {author} {\bibfnamefont {A.}~\bibnamefont {Ac\'{\i}n}},\ }\bibfield  {title}
  {\bibinfo {title} {Device-independent tests of classical and quantum
  dimensions},\ }\href {https://doi.org/10.1103/PhysRevLett.105.230501}
  {\bibfield  {journal} {\bibinfo  {journal} {Phys. Rev. Lett.}\ }\textbf
  {\bibinfo {volume} {105}},\ \bibinfo {pages} {230501} (\bibinfo {year}
  {2010})}\BibitemShut {NoStop}%
\bibitem [{\citenamefont {Ahrens}\ \emph {et~al.}(2012)\citenamefont {Ahrens},
  \citenamefont {Badziag}, \citenamefont {Cabello},\ and\ \citenamefont
  {Bourennane}}]{Ahrens.etal2012}%
  \BibitemOpen
  \bibfield  {author} {\bibinfo {author} {\bibfnamefont {J.}~\bibnamefont
  {Ahrens}}, \bibinfo {author} {\bibfnamefont {P.}~\bibnamefont {Badziag}},
  \bibinfo {author} {\bibfnamefont {A.}~\bibnamefont {Cabello}},\ and\ \bibinfo
  {author} {\bibfnamefont {M.}~\bibnamefont {Bourennane}},\ }\bibfield  {title}
  {\bibinfo {title} {Experimental device-independent tests of classical and
  quantum dimensions},\ }\href {https://doi.org/10.1038/nphys2333} {\bibfield
  {journal} {\bibinfo  {journal} {Nat. Phys.}\ }\textbf {\bibinfo {volume}
  {8}},\ \bibinfo {pages} {592} (\bibinfo {year} {2012})}\BibitemShut {NoStop}%
\bibitem [{\citenamefont {Hendrych}\ \emph {et~al.}(2012)\citenamefont
  {Hendrych}, \citenamefont {Gallego}, \citenamefont {Mi{\v{c}}uda},
  \citenamefont {Brunner}, \citenamefont {Ac{\'\i}n},\ and\ \citenamefont
  {Torres}}]{Hendrych.etal2012}%
  \BibitemOpen
  \bibfield  {author} {\bibinfo {author} {\bibfnamefont {M.}~\bibnamefont
  {Hendrych}}, \bibinfo {author} {\bibfnamefont {R.}~\bibnamefont {Gallego}},
  \bibinfo {author} {\bibfnamefont {M.}~\bibnamefont {Mi{\v{c}}uda}}, \bibinfo
  {author} {\bibfnamefont {N.}~\bibnamefont {Brunner}}, \bibinfo {author}
  {\bibfnamefont {A.}~\bibnamefont {Ac{\'\i}n}},\ and\ \bibinfo {author}
  {\bibfnamefont {J.~P.}\ \bibnamefont {Torres}},\ }\bibfield  {title}
  {\bibinfo {title} {Experimental estimation of the dimension of classical and
  quantum systems},\ }\href {https://doi.org/10.1038/nphys2334} {\bibfield
  {journal} {\bibinfo  {journal} {Nat. Phys.}\ }\textbf {\bibinfo {volume}
  {8}},\ \bibinfo {pages} {588} (\bibinfo {year} {2012})}\BibitemShut {NoStop}%
\bibitem [{\citenamefont {Brunner}\ \emph {et~al.}(2013)\citenamefont
  {Brunner}, \citenamefont {Navascu\'es},\ and\ \citenamefont
  {V\'ertesi}}]{Brunner.etal2013}%
  \BibitemOpen
  \bibfield  {author} {\bibinfo {author} {\bibfnamefont {N.}~\bibnamefont
  {Brunner}}, \bibinfo {author} {\bibfnamefont {M.}~\bibnamefont
  {Navascu\'es}},\ and\ \bibinfo {author} {\bibfnamefont {T.}~\bibnamefont
  {V\'ertesi}},\ }\bibfield  {title} {\bibinfo {title} {Dimension witnesses and
  quantum state discrimination},\ }\href
  {https://doi.org/10.1103/PhysRevLett.110.150501} {\bibfield  {journal}
  {\bibinfo  {journal} {Phys. Rev. Lett.}\ }\textbf {\bibinfo {volume} {110}},\
  \bibinfo {pages} {150501} (\bibinfo {year} {2013})}\BibitemShut {NoStop}%
\bibitem [{\citenamefont {D'Ambrosio}\ \emph {et~al.}(2014)\citenamefont
  {D'Ambrosio}, \citenamefont {Bisesto}, \citenamefont {Sciarrino},
  \citenamefont {Barra}, \citenamefont {Lima},\ and\ \citenamefont
  {Cabello}}]{DAmbrosio.etal2014}%
  \BibitemOpen
  \bibfield  {author} {\bibinfo {author} {\bibfnamefont {V.}~\bibnamefont
  {D'Ambrosio}}, \bibinfo {author} {\bibfnamefont {F.}~\bibnamefont {Bisesto}},
  \bibinfo {author} {\bibfnamefont {F.}~\bibnamefont {Sciarrino}}, \bibinfo
  {author} {\bibfnamefont {J.~F.}\ \bibnamefont {Barra}}, \bibinfo {author}
  {\bibfnamefont {G.}~\bibnamefont {Lima}},\ and\ \bibinfo {author}
  {\bibfnamefont {A.}~\bibnamefont {Cabello}},\ }\bibfield  {title} {\bibinfo
  {title} {Device-independent certification of high-dimensional quantum
  systems},\ }\href {https://doi.org/10.1103/PhysRevLett.112.140503} {\bibfield
   {journal} {\bibinfo  {journal} {Phys. Rev. Lett.}\ }\textbf {\bibinfo
  {volume} {112}},\ \bibinfo {pages} {140503} (\bibinfo {year}
  {2014})}\BibitemShut {NoStop}%
\bibitem [{\citenamefont {Liu}\ \emph {et~al.}(2023)\citenamefont {Liu},
  \citenamefont {Meng}, \citenamefont {Xu}, \citenamefont {Zhou}, \citenamefont
  {Chen}, \citenamefont {Xu}, \citenamefont {Li}, \citenamefont {Guo},\ and\
  \citenamefont {Cabello}}]{Liu.etal2023}%
  \BibitemOpen
  \bibfield  {author} {\bibinfo {author} {\bibfnamefont {Z.-H.}\ \bibnamefont
  {Liu}}, \bibinfo {author} {\bibfnamefont {H.-X.}\ \bibnamefont {Meng}},
  \bibinfo {author} {\bibfnamefont {Z.-P.}\ \bibnamefont {Xu}}, \bibinfo
  {author} {\bibfnamefont {J.}~\bibnamefont {Zhou}}, \bibinfo {author}
  {\bibfnamefont {J.-L.}\ \bibnamefont {Chen}}, \bibinfo {author}
  {\bibfnamefont {J.-S.}\ \bibnamefont {Xu}}, \bibinfo {author} {\bibfnamefont
  {C.-F.}\ \bibnamefont {Li}}, \bibinfo {author} {\bibfnamefont {G.-C.}\
  \bibnamefont {Guo}},\ and\ \bibinfo {author} {\bibfnamefont {A.}~\bibnamefont
  {Cabello}},\ }\bibfield  {title} {\bibinfo {title} {Experimental test of
  high-dimensional quantum contextuality based on contextuality
  concentration},\ }\href {https://doi.org/10.1103/PhysRevLett.130.240202}
  {\bibfield  {journal} {\bibinfo  {journal} {Phys. Rev. Lett.}\ }\textbf
  {\bibinfo {volume} {130}},\ \bibinfo {pages} {240202} (\bibinfo {year}
  {2023})}\BibitemShut {NoStop}%
\bibitem [{\citenamefont {Spekkens}\ \emph {et~al.}(2009)\citenamefont
  {Spekkens}, \citenamefont {Buzacott}, \citenamefont {Keehn}, \citenamefont
  {Toner},\ and\ \citenamefont {Pryde}}]{Spekkens.etal2009}%
  \BibitemOpen
  \bibfield  {author} {\bibinfo {author} {\bibfnamefont {R.~W.}\ \bibnamefont
  {Spekkens}}, \bibinfo {author} {\bibfnamefont {D.~H.}\ \bibnamefont
  {Buzacott}}, \bibinfo {author} {\bibfnamefont {A.~J.}\ \bibnamefont {Keehn}},
  \bibinfo {author} {\bibfnamefont {B.}~\bibnamefont {Toner}},\ and\ \bibinfo
  {author} {\bibfnamefont {G.~J.}\ \bibnamefont {Pryde}},\ }\bibfield  {title}
  {\bibinfo {title} {Preparation contextuality powers parity-oblivious
  multiplexing},\ }\href {https://doi.org/10.1103/PhysRevLett.102.010401}
  {\bibfield  {journal} {\bibinfo  {journal} {Phys. Rev. Lett.}\ }\textbf
  {\bibinfo {volume} {102}},\ \bibinfo {pages} {010401} (\bibinfo {year}
  {2009})}\BibitemShut {NoStop}%
\bibitem [{\citenamefont {Cubitt}\ \emph {et~al.}(2010)\citenamefont {Cubitt},
  \citenamefont {Leung}, \citenamefont {Matthews},\ and\ \citenamefont
  {Winter}}]{Cubitt.etal2010}%
  \BibitemOpen
  \bibfield  {author} {\bibinfo {author} {\bibfnamefont {T.~S.}\ \bibnamefont
  {Cubitt}}, \bibinfo {author} {\bibfnamefont {D.}~\bibnamefont {Leung}},
  \bibinfo {author} {\bibfnamefont {W.}~\bibnamefont {Matthews}},\ and\
  \bibinfo {author} {\bibfnamefont {A.}~\bibnamefont {Winter}},\ }\bibfield
  {title} {\bibinfo {title} {Improving zero-error classical communication with
  entanglement},\ }\href {https://doi.org/10.1103/PhysRevLett.104.230503}
  {\bibfield  {journal} {\bibinfo  {journal} {Phys. Rev. Lett.}\ }\textbf
  {\bibinfo {volume} {104}},\ \bibinfo {pages} {230503} (\bibinfo {year}
  {2010})}\BibitemShut {NoStop}%
\bibitem [{\citenamefont {Xu}\ \emph {et~al.}(2021)\citenamefont {Xu},
  \citenamefont {Yu},\ and\ \citenamefont {Kleinmann}}]{Xu.etal2021}%
  \BibitemOpen
  \bibfield  {author} {\bibinfo {author} {\bibfnamefont {Z.-P.}\ \bibnamefont
  {Xu}}, \bibinfo {author} {\bibfnamefont {X.-D.}\ \bibnamefont {Yu}},\ and\
  \bibinfo {author} {\bibfnamefont {M.}~\bibnamefont {Kleinmann}},\ }\bibfield
  {title} {\bibinfo {title} {State-independent quantum contextuality with
  projectors of nonunit rank},\ }\href
  {https://doi.org/10.1088/1367-2630/abe6e3} {\bibfield  {journal} {\bibinfo
  {journal} {New J. Phys.}\ }\textbf {\bibinfo {volume} {23}},\ \bibinfo
  {pages} {043025} (\bibinfo {year} {2021})}\BibitemShut {NoStop}%
\bibitem [{\citenamefont {Dattorro}(2005)}]{Dattorro2005}%
  \BibitemOpen
  \bibfield  {author} {\bibinfo {author} {\bibfnamefont {J.}~\bibnamefont
  {Dattorro}},\ }\href@noop {} {\emph {\bibinfo {title} {Convex Optimization
  and Euclidean Distance Geometry}}}\ (\bibinfo  {publisher} {Meboo Publishing
  USA},\ \bibinfo {year} {2005})\BibitemShut {NoStop}%
\bibitem [{\citenamefont {Boyd}\ and\ \citenamefont
  {Vandenberghe}(2004)}]{Boyd.Vandenberghe2004}%
  \BibitemOpen
  \bibfield  {author} {\bibinfo {author} {\bibfnamefont {S.}~\bibnamefont
  {Boyd}}\ and\ \bibinfo {author} {\bibfnamefont {L.}~\bibnamefont
  {Vandenberghe}},\ }\href@noop {} {\emph {\bibinfo {title} {Convex
  Optimization}}}\ (\bibinfo  {publisher} {Cambridge University Press, New
  York},\ \bibinfo {year} {2004})\BibitemShut {NoStop}%
\bibitem [{\citenamefont {Horn}\ and\ \citenamefont
  {Johnson}(2012)}]{Horn.Johnson2012}%
  \BibitemOpen
  \bibfield  {author} {\bibinfo {author} {\bibfnamefont {R.~A.}\ \bibnamefont
  {Horn}}\ and\ \bibinfo {author} {\bibfnamefont {C.~R.}\ \bibnamefont
  {Johnson}},\ }\href@noop {} {\emph {\bibinfo {title} {Matrix Analysis}}}\
  (\bibinfo  {publisher} {Cambridge University Press, Cambridge},\ \bibinfo
  {year} {2012})\BibitemShut {NoStop}%
\end{thebibliography}%

\end{document}